\numberwithin{equation}{section}
\newtheorem*{definition}{Definition}
\newtheorem*{theorem}{Theorem}
\newtheorem*{lemma}{Lemma}
\newtheorem*{remark}{Remark}
\newcommand{\appropto}{\mathrel{\vcenter{
  \offinterlineskip\halign{\hfil$##$\cr
    \propto\cr\noalign{\kern2pt}\sim\cr\noalign{\kern-2pt}}}}}
\title{Dynamics of Nonlinear Random Walks on Complex Networks}
\author{Per Sebastian Skardal\thanks{persebastian.skardal@trincoll.edu} }
\author{Sabina Adhikari}
\affil{Department of Mathematics, Trinity College, Hartford, CT 06106, USA}
\date{}
\begin{document}

\maketitle

\begin{abstract}
In this paper we study the dynamics of nonlinear random walks. While typical random walks on networks consist of standard Markov chains whose static transition probabilities dictate the flow of random walkers through the network, nonlinear random walks consist of nonlinear Markov chains whose transition probabilities change in time depending on the current state of the system. This framework allows us to model more complex flows through networks that may depend on the current system state. For instance, under humanitarian or capitalistic direction, resource flow between institutions may be diverted preferentially to poorer or wealthier institutions, respectively. Importantly, the nonlinearity in this framework gives rise to richer dynamical behavior than occurs in linear random walks. Here we study these dynamics that arise in weakly and strongly nonlinear regimes in a family of nonlinear random walks where random walkers are biased either towards (positive bias) or away from (negative bias) nodes that currently have more random walkers. In the weakly nonlinear regime we prove the existence and uniqueness of a stable stationary state fixed point provided that the network structure is primitive that is analogous to the stationary distribution of a typical (linear) random walk. We also present an asymptotic analysis that allows us to approximate the stationary state fixed point in the weakly nonlinear regime. We then turn our attention to the strongly nonlinear regime. For negative bias we characterize a period-doubling bifurcation where the stationary state fixed point loses stability and gives rise to a periodic orbit below a critical value. For positive bias we investigate the emergence of multistability of several stable stationary state fixed points.
\end{abstract}

\section{Introduction}\label{sec:01}
Discrete-time random walks on complex networks represent a powerful tool, surfacing in a wide range of applications~\cite{Noh2004PRL}. Perhaps the most famous among these applications is Google's PageRank algorithm, where the PageRank centrality of a given webpage is given by the fraction of time a random walker spends at that webpage in the long run, i.e., the steady-state dynamics of the probability distribution of occupation~\cite{Brin1998,Page1999}. PageRank itself has found applications in many venues beyond webpage centralities and centrality rankings in other networks; a handful of these include identifying correlated genes and proteins, studying traffic flow, and predicting outcomes of sports matches~\cite{Gleich2015SIAMRev}. Other widely-used applications of random walks on networks include community detection methods~\cite{Rosvall2008PNAS} and modeling transport through large systems~\cite{Nicosia2017PRL,Gorenflo2002CP}.

An important reason why random walks constitute such a powerful tool in such a wide range of applications is due to the simplicity of the underlying dynamics. Consider a network, or graph, $G=(V,E)$, where $V$ is the set of nodes, or vertices, and $E$ is the set of links, or edges, describing pair-wise connections between distinct vertices. Such a network may be described by an adjacency matrix $A$ with entries $[A]_{ij}=a_{ij}$, where $a_{ij}=1$ if nodes $i$ and $j$ are connected with a link, and $a_{ij}=0$ otherwise. This framework assumes that the underlying network is undirected and unweighted (i.e., binary), but can be generalized to the directed and weighted case if, respectively, $a_{ij}\ne a_{ji}$ for some pair $(i,j)$ or non-zero entries of $A$ may take on non-unit values to indicate a ``strength''. (Here we focus our attention on the undirected, unweighted case.) Next, consider a single random walker on the network and the probability $p_i(t)\in[0,1]$ that the walker occupies node $i$ at time $t$. The dynamics of the random walk, characterized by the evolution of the probability vector $\bm{p}(t)$, are given by
\begin{align}
p_i(t+1)=\sum_{j=1}^N\pi_{ij}p_j(t),\hskip2ex\text{or in vector form,}\hskip2ex\bm{p}(t+1)=\Pi\bm{p}(t),\label{eq:01:01}
\end{align}
where $\Pi$ is the transition matrix whose entries $[\Pi]_{ij}=\pi_{ij}$ represent the probability of a random walker moving to node $i$ at time $t+1$ given that it occupies node $j$ at time $t$.  In the most typical case, the transition probabilities are {\it unbiased}, meaning the probabilities of a random walker moving from node $j$ to any of the neighbors of node $j$ are uniform. Since $\pi_{ij}$ is the conditional probability of moving from $j\to i$ in one time step, $\Pi$ must be column stochastic, i.e., $\sum_{i=1}^N\pi_{ij}=1$, this corresponds to the choice $\pi_{ij}=a_{ij}/k_{j}$, where $k_{j}=\sum_{i=1}^Na_{ij}$ is the degree of node $j$. Generalizations to this choice of transition probabilities have also been studied, biasing random walkers towards or away from nodes with certain characteristics, usually related to the local network structure~\cite{Gomez2008PRE}. In particular, it has been shown that appropriately biasing random walks increases the efficiency with which random walks traverse and explore their networks~\cite{Sinatra2011PRE}.

The simple structure of the discrete-time random walk on a network, which consists of an $N$-state discrete Markov chain, yields relatively simple dynamics due to the linearity of Eq.~(\ref{eq:01:01}). Assuming that the transition matrix $\Pi$ is primitive, i.e., irreducible and aperiodic, the dynamics of Eq.~(\ref{eq:01:01}) converge to a unique stationary state fixed point, denoted $\bm{p}^*$, that satisfies $\bm{p}^*=\Pi \bm{p}^*$~\cite{Durrett2016}. We note that this condition is equivalent to those that asserts the existence of a unique, real, largest eigenvalue of $\Pi$ with an associated positive eigenvector using the Perron-Frobenius theorem~\cite{MacCluer2000SIAMRev}. From this point of view, the dynamics can be understood as the convergence of $\bm{p}(t)=\Pi^t\bm{p}(0)$ to the stationary distribution $\bm{p}^*$, which is precisely the dominant eigenvector of the transition matrix $\Pi$.

In this work we study the dynamics that arise in {\it nonlinear random walks} on networks, which consist of an $N$-state discrete nonlinear Markov chain~\cite{Frank2013ISRN,Kolokoltsov2010}. In this more complicated setting the transition probabilities that dictate the flow of random walkers through he network change in time as a function of the current system state. The dynamics of nonlinear random walks are important for their ability to model flows through networks that are more complicated than the linear case. Consider, for instance, the flow of resources through a network of institutions. In such a scenario an individual random walker may represents one unit of the resource, with a large number $N\gg1$ of random walkers moving simultaneously through the network such that $p_i(t)$ represents not only the probability of a given random walker being present at node $i$ at time $t$, but also the fraction of all random walkers that occupy node $i$ at time $t$. In such a scenario, we argue that modeling the movement of resources using a random walk with static transition probabilities may be too simplistic and unrealistic. Consider, for instance, the flow of resources with a humanitarian purpose. In this scenario, resources are diverted more prevalently to locations that are currently lacking in resources compared to locations that are more wealthy. In terms of a random walk, this can be modeled by modifying transition probabilities to bias random walkers to nodes that have fewer random walkers present. On the other hand, financial agents with a more capitalist philosophy tend to invest more of their resources in institutions that are already wealthy, making for a safe investment. In terms of a random walk, this can be modeled by modifying transition probabilities to bias random walkers towards nodes that have more random walkers present. Notice that in either case, to more realistically model the dynamics, we require that the transition probabilities are not static, but depend on the current state of the system itself. 

In order to facilitate the kind of behavior described above, transition probabilities in nonlinear random walks are allowed to be non-static, i.e., changing in time in response to the current state of the system, so we let $\Pi=\Pi(\bm{p})$. Note that, as soon as the transition matrix $\Pi$ depends on the probability vector $\bm{p}$, the linearity of the dynamics is broken, raising several important questions. For instance, what becomes of the stationary state observed in the linear random walk? Depending on the nature of the nonlinearity, does a similar unique, stationary state exist? If so, how does the nonlinearity modify it? What other, more exotic, dynamics are observed? How do the dynamics make transitions between simple and complicated behaviors? In fact, the nonlinear random walk described above constitutes particular instance of a nonlinear Markov chain~\cite{Frank2013ISRN,Kolokoltsov2010}. Some properties of nonlinear Markov chains have to date been investigated, for instance ergodicity~\cite{Butkovsky2014SIAM,Saburov2016NLA}, the emergence of multiple solutions~\cite{Frank2008PLA}, their role in game theory~\cite{Kolokoltsov2012IJSP}, and other modeling applications~\cite{Frank2008JPA,Frank2009EPJB,Frank2011BJP}. However, little work has studied the behavior that emerges in a nonlinear random walk from a nonlinear dynamics perspective as a result of the rules that govern the transition probabilities and the underlying network structure. In this paper we take a nonlinear dynamics approach to studying the behavior of nonlinear random walks on networks.

This paper is organized as follows. In Sec.~\ref{sec:02} we define nonlinear random walks on complex networks, present some mathematical preliminaries, and illustrate the dynamics of the nonlinear random walk on a network. In Sec.~\ref{sec:03} we begin our investigation of the dynamics in the weakly nonlinear regime. In particular, we present and prove a theorem asserting the existence and uniqueness of a stable stationary state fixed point for sufficiently weak nonlinearity and appropriate network structures. In particular, we require that (like the case of the linear random walk) the network structure is primitive. In Sec.~\ref{sec:04} we study the structure of this stationary state, using a perturbative analysis to derive an asymptotic approximation for the stationary state fixed point in the weakly nonlinear regime. In Sec.~\ref{sec:05} we turn our attention beyond the weakly nonlinear regime and consider dynamics in the strongly nonlinear regime. We begin by characterizing a period-doubling bifurcation that occurs for negative bias parameter values. Then, in Sec.~\ref{sec:06} we consider positive bias parameters where we observe the emergence of multistability, i.e., several simultaneously stable stationary state fixed points, and study the properties of these multiple stable fixed points using the concept of basin stability. In Sec.~\ref{sec:07} we conclude with a discussion of our results and, as this work opens up a new direction of research in nonlinear dynamics and complex networks, some important questions for future work to address.

\section{Nonlinear Random Walks}\label{sec:02}

We begin by describing the nonlinear random walk on a network and presenting some mathematical preliminaries that will be used in our analysis. In particular, in the setting of a nonlinear random walk the transition matrix to depend on the current probability vector explicitly, i.e., $\Pi(\bm{p})$. The system dynamics may then be written as
\begin{align}
p_i(t+1)=\sum_{j=1}^N\pi_{ij}(\bm{p}(t))p_j(t),\hskip2ex\text{or in vector form,}\hskip2ex\bm{p}(t+1)=\Pi(\bm{p}(t))\bm{p}(t).\label{eq:02:01}
\end{align}
These dynamics must map $N$-dimensional probability vectors to $N$-dimensional probability vectors, that is, the vectors $\bm{p}(t)$ must be elements of $\Omega\subset\mathbb{R}^N$, where
\begin{align}
\Omega=\left\{(p_1,p_2,\dots,p_N)^T\in\mathbb{R}^N\bigg|p_i\ge0\text{ for all }i=1,2,\dots,N\text{, and }\sum_{i=1}^Np_i=1\right\}.\label{eq:02:02}
\end{align}
To ensure that this mapping holds we require that $\Pi(\bm{p})$ is column stochastic for all $\bm{p}\in\Omega$, i.e., $\pi_{ij}(\bm{p})\ge0$ for all $i,j$ and $\sum_{i=1}^N\pi_{ij}(\bm{p})=1$ for all $j$. Assuming that the transition probabilities $\pi_{ij}$ are determined by the family of non-negative functions $f_{ij}:\mathbb{R}^N\to[0,\infty)$, we have that
\begin{align}
\pi_{ij}(\bm{p})=\frac{a_{ij}f_{ij}(\bm{p})}{\sum_{l=1}^Na_{lj}f_{lj}(\bm{p})}.\label{eq:02:03}
\end{align}
In this work we will focus our attention on a specific choice of the family of functions $f$ chosen from a family of exponential functions of the current probabilities. In particular, we define the nonlinear random walk below based on using exponential functions to define transition probabilities.

\begin{definition}[Nonlinear Random Walks]\label{def:Nonlinear}
Let $G$ a network of size $N$ with adjacency matrix $A$. The nonlinear, discrete-time random walk on $G$ with parameter $\alpha\in\mathbb{R}$ is given by the dynamics in Eq.~(\ref{eq:02:01}) with transition matrix $\Pi(\bm{p})$ whose entries are given by
\begin{align}
\pi_{ij}(\bm{p})=\frac{a_{ij}\text{exp}(\alpha p_i)}{\sum_{l=1}^Na_{lj}\text{exp}(\alpha p_l)}.\label{eq:02:04}
\end{align}
\end{definition}

\begin{remark}\label{rmk:Nonlinear}
The dynamics of the nonlinear random walk described by Eq.~(\ref{eq:02:01}) and Eq.~(\ref{eq:02:04}) depends on the network structure, encoded in the adjacency matrix $A$ and the nonlinearity parameter $\alpha$. To shed light on the effect of $\alpha$, note first that for the choice $\alpha=0$ we recover the unbiased linear random walk given by transition probabilities $\pi_{ij}=a_{ij}/k_j^{\text{out}}$. Then, by tuning $\alpha$ to be positive or negative we direct random walkers towards or away from nodes depending on the current state of the system. Specifically, $\alpha>0$ directs random walkers towards nodes that already have a more random walkers, while $\alpha<0$ directs random walkers towards nodes that have fewer random walkers. Moreover, this specific choice of transition probabilities based on the current system state properly maps vectors in $\Omega$ into $\Omega$. To see this one can easily check that the the vector $\bm{p}(t+1)=\Pi(\bm{p}(t))\bm{p}(t)$ is non-negative and sums to one.

We also point out that, just as in the case of linear random walks, the nonlinear random walk described by Eq.~(\ref{eq:02:01}) and Eq.~(\ref{eq:02:04}) is memory-less, i.e., Markovian. Thus, the biases in the probabilities directing random walkers from one node to another do not build up over time. Instead, the transition probabilities depend only on the current state of the system.
\end{remark}

Next we discuss some properties of network structures that are critical to the dynamics of both linear and nonlinear random walks. In particular, we will be interested in whether a given network $G$ has an adjacency matrix $A$ that is primitive or non-primitive. Moreover, the primitiveness of an adjacency matrix can be connected to it being irreducible and aperiodic.

\begin{definition}(Irreducible, Aperiodic, and Primitive Matrices)\label{def:Primitive}
Let $M$ be an $N\times N$ non-negative matrix~\cite{Meyer2000}. Then:
\begin{itemize}
\item The matrix $M$ is irreducible if, for all pairs $(i,j)$, there exists an integer $m\ge1$ such that $[M^m]_{ij}>0$.
\item If $M$ is irreducible, the period of $M$ is largest common divisor of all numbers $m\ge1$ that satisfy $[M^m]_{ii}>0$ for any index $i$. The matrix $M$ is aperiodic is the period of $M$ is one.
\item If $M$ is irreducible and aperiodic then it is primitive. Equivalently, $M$ is primitive if there exists an integer $n\ge1$ such that $M^n$ is strictly positive~\cite{Meyer2000}.
\end{itemize}
\end{definition}

\begin{remark}
As mentioned above, our main interest here is in whether a network's adjacency is primitive or non-primitive. (While this can be determined by the reducibility and periodicity, it is often more convenient to inspect higher powers of the adjacency matrix.) Recall that the dynamics of a linear random walk has a unique stationary state fixed point if and only is its adjacency matrix is primitive. As we will see in the following section, the primtiveness of a network's adjacency matrix will guarantee the uniqueness of a stable stationary state fixed point in the weakly nonlinear regime of the nonlinear random walk as well.
\end{remark}

\begin{figure}[htbp]
\centering
\includegraphics[width=0.98\textwidth]{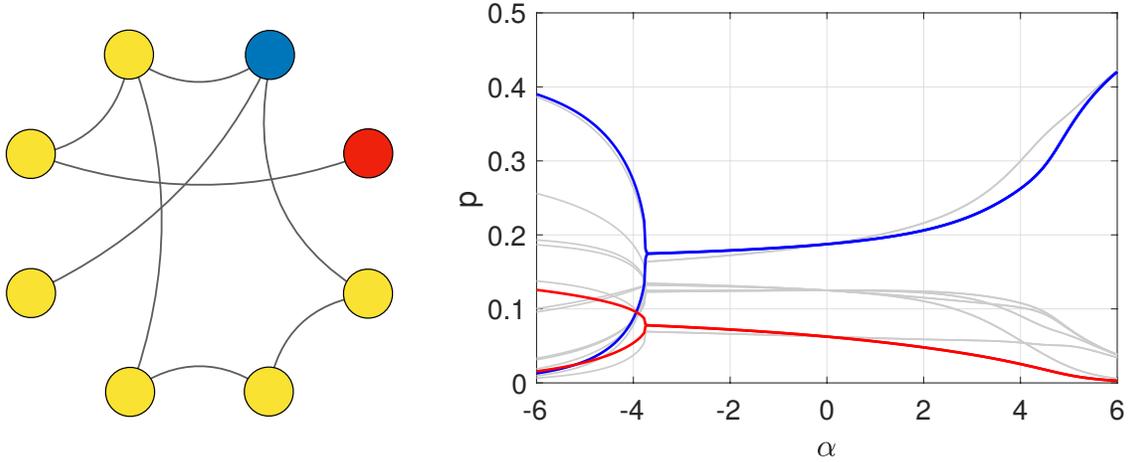}
\caption{{\bf Nonlinear random walks on complex networks: An example}. Left: A network consisting of $N=8$ nodes with $M=8$ links with mean degree $\langle k\rangle=2$. Importantly, the network's adjacency matrix is primitive. Right: the steady-state probabilities $p_i$ (obtained after $5000$ iterations of the nonlinear random walk defined in Eq.~(\ref{eq:02:01}) and Eq.~(\ref{eq:02:04}) as a function of the bias parameter $\alpha$. The probabilities of the nodes colored blue and red are plotted in blue and red, with other probabilities plotted in gray.}\label{fig:01}
\end{figure}

Before proceeding with the analysis of the nonlinear random walk described above, we illustrate the dynamics that emerge for various choices of the bias parameter $\alpha$. In Figure~\ref{fig:01} we show a small network consisting of $N=8$ nodes with $M=8$ links such that the mean degree of the network is $\langle k\rangle=2$. Importantly, this network has a primitive adjacency matrix, with $A^8$ being the lowest power of the matrix that is strictly positive. Next, we simulate the nonlinear random walk dynamics defined by Eq.~(\ref{eq:02:01}) and Eq.~(\ref{eq:02:04}) on this network, plotting the steady-state (obtained after $5000$ iterations for each value of $\alpha$). Specifically, we plot the steady-state probability $p_i$ for each node as a function of $\alpha$, highlighting the probabilities of the nodes colored blue and red with blue and red curves, while all other probabilities are plotted in gray. Note first that for $\alpha=0$ the steady-state probabilities for each node $i$ are proportional to the degree $k_i$. This is because $\alpha=0$ recovers the linear random walk case. Next, for positive values of $\alpha$ (signifying the case where random walkers are preferentially routed towards nodes that already have larger probabilities) the two nodes of larger degree accrue larger probabilities as $\alpha$ is increased while the other nodes all lose random walkers. However, the increase and decrease of the curves $p_i$ display nonlinear behaviors, criss-crossing each other at several locations. Finally, for negative $\alpha$ (signifying the case where random walkers are preferentially routed towards nodes that have smaller probabilities) the trend initially reverses, with all probabilities initially moving closer to one another. However when $\alpha$ becomes sufficiently negative we observe an interesting phenomenon characterized by a period-doubling bifurcation. In particular, the sufficiently negative value of $\alpha$ causes the dynamics of the nonlinear random walk no longer relaxes to a stationary state fixed point solution, but oscillates between two states at each node. Naturally, the onset of oscillating behavior in each node occurs at the same critical value of $\alpha$, however the values between which $p_i$ oscillates varies for each node $i$. For example, the probabilities of the larger degree node illustrated in blue are far more spread out than those of the lower degree node illustrated in red. Finally, we note that as $\alpha$ is further decreased to more negative values no further period-doubling bifurcations are observed in our simulations (not shown).

\section{Stationary State in the Weakly Nonlinear Regime}\label{sec:03}

We begin our analysis of the nonlinear random walk on a complex network by observing in Figure~\ref{fig:01} that for values of the $\alpha$ parameter that are close to zero the stationary state fixed point to which the dynamics converge are close to that of the linear random walk recovered for precisely $\alpha=0$. However, because the linearity of the dynamics is broken for $\alpha\ne0$ the properties that we find in the linear case are no longer guaranteed to hold. Thus, a natural question to pose is, for certain values of $\alpha$ can one guarantee the existence and uniqueness of a stationary state fixed point solution in the nonlinear random walk towards which the dynamics converge, similar to the linear case? In this section we answer this question in the affirmative provided that the network has a primitive adjacency matrix. Specifically, we prove in the analysis below that for a bias parameter that is sufficiently close to zero, i.e., $|\alpha|\ll1$, if the network's adjacency matrix is primitive, then the nonlinear random walk converges to a stationary state fixed point. We therefore refer to the regime where $|\alpha|$ is sufficiently small as the weakly nonlinear regime. Moreover, we will demonstrate that, is in the linear random walk case, if the network's adjacency matrix is non-primitive, then the dynamics tend not to converge to a uniques stationary state fixed point solution.

To prove that the nonlinear random walk on a network with a primitive adjacency matrix converges to a unique stationary state fixed point for sufficiently small $|\alpha|$ we will show that iterating the dynamics forward in time constitutes a contraction mapping. We will then rely on the contraction mapping theorem to complete the proof~\cite{Hunter2001}. In \ref{app:A} we summarize the definition of a contraction mapping and the contraction mapping theorem. To formalize this we require a complete metric space on which the contraction mapping acts. Here, we will consider the space $\Omega$, as defined in Eq.~(\ref{eq:02:02}) equipped with the $\ell^1$, or taxicab metric, defined, for $\bm{p},\bm{q}\in\Omega$, by
\begin{align}
d(\bm{p},\bm{q})=\|\bm{p}-\bm{q}\|_1=\sum_{i=1}^N|p_i-q_i|.\label{eq:03:01}
\end{align}
We will also utilize the matrix norm induced by $\ell^1$, given by, for an $N\times N$ matrix $A$ with entries $[A]_{ij}$,
\begin{align}
\|A\|_1=\sup_{\|x\|_1=1}\|Ax\|_1=\max_{1\le j\le N}\sum_{i=1}^N|a_{ij}|.\label{eq:03:02}
\end{align}
Note that the matrix norm corresponds to the maximum column sum of the matrix after taking the absolute value of each entry. We begin by asserting a simply property about the $\ell^1$ norm of a transition matrix.

\begin{lemma}\label{Lemma:ell1}
Let $M$ be an $N\times N$ transition matrix. Then $\|M\|_1=1$.
\end{lemma}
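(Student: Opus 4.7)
The plan is to apply the explicit column-sum formula for the induced $\ell^1$ matrix norm given in Eq.~(\ref{eq:03:02}) directly to the definition of a column-stochastic transition matrix. There is essentially no obstacle here; the lemma is a one-line consequence of two facts being compatible, and its real role is as a packaged tool to be reused later when bounding differences $\Pi(\bm{p})\bm{p} - \Pi(\bm{q})\bm{q}$ in the contraction-mapping argument sketched in Section~\ref{sec:03}.

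First I would recall what it means for $M$ to be a transition matrix in the sense used throughout the paper (cf.\ the discussion preceding Eq.~(\ref{eq:02:03})): the entries $m_{ij}$ are non-negative and every column sums to one, i.e.\ $m_{ij} \ge 0$ and $\sum_{i=1}^{N} m_{ij} = 1$ for each $j \in \{1,\dots,N\}$. Next I would invoke the identity in Eq.~(\ref{eq:03:02}), namely that the induced $\ell^1$ norm equals the maximum over columns of the sum of absolute values of the entries in that column.

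The two observations then combine immediately. Because $m_{ij} \ge 0$, we have $|m_{ij}| = m_{ij}$, so the $j$-th column sum of absolute values is $\sum_{i=1}^{N} m_{ij} = 1$. Taking the maximum over $j \in \{1,\dots,N\}$ of a constant sequence equal to $1$ gives $\|M\|_1 = 1$, which is exactly the claim. The only thing to be careful about is not to confuse column-stochastic with row-stochastic: the paper's convention (evident from Eq.~(\ref{eq:01:01}) and the normalization in Eq.~(\ref{eq:02:03})) is that columns sum to one, which matches the maximum-column-sum characterization of the $\ell^1$ norm rather than the maximum-row-sum characterization of the $\ell^\infty$ norm, so the two happen to line up perfectly and no transpose is needed.

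If a supremum-based alternative were desired, one could equivalently observe that for any $\bm{x}$ with $\|\bm{x}\|_1 = 1$, $\|M\bm{x}\|_1 = \sum_i |\sum_j m_{ij} x_j| \le \sum_j |x_j| \sum_i m_{ij} = \sum_j |x_j| = 1$, giving $\|M\|_1 \le 1$, and then taking $\bm{x} = \bm{e}_j$ (a standard basis vector) shows $\|M \bm{e}_j\|_1 = \sum_i m_{ij} = 1$, achieving the bound. Either route is essentially trivial, so I would present the shortest version based directly on Eq.~(\ref{eq:03:02}).
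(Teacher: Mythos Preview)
Your proof is correct and follows exactly the same approach as the paper: both use the column-sum formula for the induced $\ell^1$ norm from Eq.~(\ref{eq:03:02}), drop the absolute values via non-negativity of the entries, and then use column-stochasticity to conclude that every column sum equals $1$. Your additional remarks (the row/column caveat and the alternative supremum argument) are fine but go beyond what the paper presents.
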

\begin{proof}
Since $M$ is a transition matrix, all of its entries are non-negative and each column sums up to one. Denoting its entries $[M]_{ij}=m_{ij}$, we have that
\begin{align}
\|M\|_1=\max_{1\le j\le N}\sum_{i=1}^N|m_{ij}|=\max_{1\le j\le N}\sum_{i=1}^Nm_{ij}=\max_{1\le j\le N}1=1,\label{eq:03:03}
\end{align}
thus completing the proof.
\end{proof}

We now turn our attention to the nonlinear transition matrix $\Pi(\bm{p})$ whose entries are given in Eq.~(\ref{eq:02:04}). Our focus here is on the case of weak nonlinearity, i.e., $|\alpha|\ll1$. In this regime we may treat $\Pi(\bm{p})$ using a series expansion where high-order terms involving powers of $\alpha$ may be neglected. In particular, the entries of $\Pi(\bm{p})$ may be written as
\begin{align}
\pi_{ij}(\bm{p})&=\frac{a_{ij}}{k_{j}}+\alpha\frac{a_{ij}}{k_{j}}\left(p_i-\frac{1}{k_j}\sum_{l=1}^Na_{lj}p_l\right)+\mathcal{O}(\alpha^2)\label{eq:03:04}\\
&=\pi_{ij}^{(0)}+\alpha~\pi_{ij}^{(1)}(\bm{p})+\mathcal{O}(\alpha^2),\nonumber
\end{align}
where $\pi_{ij}^{(0)}$ and $\pi_{ij}^{(1)}(\bm{p})$ represent the zero-order term and first-order correction to the transition probability $\pi_{ij}(\bm{p})$ and the respective entries of the matrices $\Pi^{(0)}$ and $\Pi^{(1)}(\bm{p})$. We note here that these two matrices have important properties. First, $\Pi^{(0)}$ is precisely the transition matrix for the (unbiased) linear random walk on the network with adjacency matrix $A$, and therefore $\Pi^{(0)}$ is itself a transition matrix. Second, the columns of the correction matrix $\Pi^{(1)}(\bm{p})$ sum to zero:
\begin{align}
\sum_{i=1}^N\pi_{ij}^{(1)}(\bm{p})&=\sum_{i=1}^N\frac{a_{ij}}{k_{j}}\left(p_i-\frac{1}{k_j}\sum_{l=1}^Na_{lj}p_l\right)\label{eq:03:05}\\
&=\sum_{i=1}^N\frac{a_{ij}p_i}{k_{j}}-\sum_{i=1}^N\frac{a_{ij}}{(k_j)^2}\sum_{l=1}^Na_{lj}p_l(t)\nonumber\\
&=\sum_{i=1}^N\frac{a_{ij}p_i}{k_{j}}-\frac{1}{k_j}\sum_{l=1}^Na_{lj}p_l(t)=0.\nonumber
\end{align}
Moreover, the non-zero entries of $\Pi^{(1)}(\bm{p})$ coincide with the positive entries of $\Pi^{(0)}$, so for sufficiently small $|\alpha|$ we can ensure that the entries of the linearized matrix $\Pi^{(0)}+\alpha\Pi^{(1)}(\bm{p})$ are non-negative and the columns sum to one, so that it is a (column) stochastic matrix. Moreover, $|\alpha|$ we be chosen small enough to neglect terms of order $\mathcal{O}(\alpha^2)$ and higher, so that we may work with the linearized transition matrix. 

Turning our attention back to the nonlinear random walk, we note that in the case of the linear random walk the existence and uniqueness of the stationary state follows from the property of the adjacency matrix $A$ (and therefore the transition matrix $\Pi$) being primitive, that is, for some integer $n\ge1$ the matrix $A^n$ is strictly positive. Here we maintain this requirement on $A$ for the nonlinear random walk to guarantee a unique stationary state. We then define the the matrix that maps the dynamics forward $n$ steps, and prove some important properties about this operator.

\begin{definition}
Consider a network $G$ with primitive adjacency matrix $A$. Denote $n\ge1$ as the smallest possible integer satisfying $A^n$ is strictly positive. Then define $Q_n(\bm{p})$ as the matrix that maps the dynamics of the nonlinear random walk forward $n$ steps, so that $\bm{p}(t+n)=Q_n(\bm{p}(t))\bm{p}(t)$:
\begin{align}
Q_n(\bm{p}(t))=\Pi(\bm{p}(t+n-1))\Pi(\bm{p}(t+n-2))\cdots\Pi(\bm{p}(t+1))\Pi(\bm{p}(t))=\prod_{m=0}^{n-1}\Pi(\bm{p}(t+m)).\label{eq:03:06}
\end{align}
\end{definition}

We are currently interested in the properties of $Q_n(\bm{p})$ in the small nonlinearity regime. Note that, using the expansion of $\Pi(\bm{p})$ in Eq.~(\ref{eq:03:04}), we may write
\begin{align}
Q_n(\bm{p}(t))&=\left(\prod_{m=0}^{n-1}\left(\Pi^{(0)}+\alpha\Pi^{(1)}(\bm{p}(t+m))+\mathcal{O}(\alpha^2)\right)\right)\label{eq:03:07}\\
&=\left(\Pi^{(0)}\right)^n+\alpha\left(\Pi^{(0)}\right)^{n-1}\left(\sum_{m=0}^{n-1}\Pi^{(1)}(\bm{p}(t+m))\right)+\mathcal{O}(\alpha^2).\nonumber
\end{align}
Next we prove some important properties of the terms in Eq.~(\ref{eq:03:07}). First, the leading order term is strictly positive.
\begin{lemma}\label{Lemma:Positive}
Consider a network $G$ with primitive adjacency matrix $A$. Denote $n\ge1$ as the smallest possible integer satisfying $A^n$ is strictly positive. Then $(\Pi^{(0)})^n$, whose entries are denoted in Eq.~(\ref{eq:03:04}), is strictly positive.
\end{lemma}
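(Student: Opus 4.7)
The plan is to reduce the claim to a sparsity-pattern comparison between $(\Pi^{(0)})^n$ and $A^n$. The key observation is that $\pi^{(0)}_{ij} = a_{ij}/k_j$ has the same support as $a_{ij}$, as long as each $k_j$ is strictly positive. But primitivity of $A$ already forces this: if some column $j$ of $A$ were identically zero, then column $j$ of $A^n$ would be zero as well, contradicting the assumption that $A^n$ is strictly positive. So every $k_j > 0$, the entries of $\Pi^{(0)}$ are well-defined, and $\pi^{(0)}_{ij} > 0$ iff $a_{ij} > 0$.

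Next I would expand the $(i,j)$ entry of $(\Pi^{(0)})^n$ as a sum over length-$n$ index sequences,
$$[(\Pi^{(0)})^n]_{ij} = \sum_{i_1,\ldots,i_{n-1}} \pi^{(0)}_{i,i_{n-1}}\,\pi^{(0)}_{i_{n-1},i_{n-2}}\cdots \pi^{(0)}_{i_1,j},$$
and analogously for $[A^n]_{ij}$. Since all summands are non-negative, each sum is positive if and only if at least one summand is. A summand is positive precisely when every factor is positive; by the previous paragraph, this happens for the $\Pi^{(0)}$-expansion iff the corresponding factors $a_{i_{k+1},i_k}$ are all positive, which is exactly the condition that makes the matching summand in the $A^n$-expansion positive. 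Hence $[(\Pi^{(0)})^n]_{ij} > 0$ iff $[A^n]_{ij} > 0$ for every pair $(i,j)$. Since by hypothesis the right-hand side holds for all $(i,j)$, so does the left, which is the desired strict positivity.

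There is essentially no technical obstacle beyond bookkeeping; the only point that needs any care is the well-definedness of the denominators $k_j$, and that is immediate from primitivity as above. The combinatorial interpretation (a length-$n$ walk from $j$ to $i$ exists in $G$ iff the corresponding walk-indexed summand is positive in either matrix power) is what makes the sparsity patterns of $A^n$ and $(\Pi^{(0)})^n$ coincide.
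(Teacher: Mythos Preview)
Your proof is correct but takes a different route from the paper. You argue via sparsity patterns: $\Pi^{(0)}$ has the same support as $A$ (once you verify each $k_j>0$), so $(\Pi^{(0)})^n$ has the same support as $A^n$, which is strictly positive by hypothesis. The paper instead uses a quantitative entrywise bound: since $\pi_{ij}^{(0)} = a_{ij}/k_j \ge a_{ij}/k_{\max}$, one has $\Pi^{(0)} \ge k_{\max}^{-1}A$ entrywise, and taking the $n$-th power (which preserves entrywise inequalities between non-negative matrices) gives $(\Pi^{(0)})^n \ge k_{\max}^{-n}A^n > 0$. The paper's argument is a two-line computation and yields an explicit numerical lower bound on the entries of $(\Pi^{(0)})^n$, which could feed into an explicit contraction constant downstream; your combinatorial argument is a bit longer but makes the support equivalence transparent and would apply verbatim to any rescaling $AD^{-1}$ with strictly positive diagonal $D$, not just the degree matrix.
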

\begin{proof}
Recall that the entries of $\Pi^{(0)}$ are given by $\pi_{ij}^{(0)}=a_{ij}/k_j$, so that
\begin{align}
\pi_{ij}^{(0)}=\frac{a_{ij}}{k_j}\ge \frac{a_{ij}}{k_{\text{max}}},\label{eq:03:08}
\end{align}
where $k_{\text{max}}=\max_{j}k_j$ is the maximum nodal out-degree and $k_{\text{max}}^{\text{out}}>0$. Then, taking the $n^{\text{th}}$ power of $\Pi^{(0)}$ we have
\begin{align}
(\Pi^{(0)})^n\ge\left(\frac{1}{k_{\text{max}}}A\right)^n=\frac{1}{(k_{\text{max}})^n}A^n,\label{eq:03:09}
\end{align}
which is strictly positive, completing the proof.
\end{proof}
We also prove that the $\ell^1$ norm of the summation term in Eq.~(\ref{eq:03:07}) is bounded.
\begin{lemma}\label{Lemma:Bounded}
Consider a network $G$ with adjacency matrix $A$. Then
\begin{align}
\left\|\sum_{m=0}^{n-1}\Pi^{(1)}(\bm{p}(t+m))\right\|_1\le 2n,\label{eq:03:10}
\end{align}
where the entries of $\Pi^{(1)}(\bm{p}(t+m))$ are denoted in Eq.~(\ref{eq:03:04}).
\end{lemma}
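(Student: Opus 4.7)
The plan is to reduce the bound to a uniform estimate on a single summand via the triangle inequality for the induced $\ell^1$ matrix norm. Since $\|\cdot\|_1$ is a norm,
\begin{align*}
\left\|\sum_{m=0}^{n-1}\Pi^{(1)}(\bm{p}(t+m))\right\|_1 \le \sum_{m=0}^{n-1}\left\|\Pi^{(1)}(\bm{p}(t+m))\right\|_1,
\end{align*}
so it is enough to show that $\|\Pi^{(1)}(\bm{q})\|_1\le 2$ for every $\bm{q}\in\Omega$, which immediately yields the stated bound $2n$.

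To establish the uniform estimate $\|\Pi^{(1)}(\bm{q})\|_1\le 2$, I would start from the entrywise formula in Eq.~(\ref{eq:03:04}) and write down the absolute column sum for an arbitrary column $j$. Introducing the neighbor average $\bar q_j = k_j^{-1}\sum_{l=1}^N a_{lj}q_l$, the $j$th column sum reads $\sum_{i=1}^N (a_{ij}/k_j)\,|q_i-\bar q_j|$. I would then apply the elementary bound $|q_i-\bar q_j|\le q_i+\bar q_j$, valid because $q_i,\bar q_j\ge 0$. Splitting the sum gives $\sum_i (a_{ij}/k_j)q_i + \bar q_j\sum_i (a_{ij}/k_j) = \bar q_j + \bar q_j = 2\bar q_j$. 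Finally, $\bar q_j$ is a convex combination of components of $\bm{q}\in\Omega$, each lying in $[0,1]$, so $\bar q_j\le 1$ and the column sum is at most $2$. Maximizing over $j$ delivers $\|\Pi^{(1)}(\bm{q})\|_1\le 2$, and combining with the triangle step completes the proof.

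I do not anticipate any genuine obstacle: the argument is algebraic bookkeeping, and the only real choice is to use $|q_i-\bar q_j|\le q_i+\bar q_j$ instead of tracking signs separately, which would introduce unnecessary case analysis. I note in passing that the sharper pointwise estimate $|q_i-\bar q_j|\le\max(q_i,\bar q_j)\le 1$ would improve the bound to $n$, but the constant $2n$ is sufficient for the contraction mapping argument of the preceding section.
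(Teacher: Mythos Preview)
Your proposal is correct and follows essentially the same route as the paper: triangle inequality on the sum over $m$, then bound each column sum by splitting $|q_i-\bar q_j|\le q_i+\bar q_j$ and using $\bar q_j\le 1$. Your introduction of the neighbor-average notation $\bar q_j$ streamlines the bookkeeping, and your parenthetical remark that $|q_i-\bar q_j|\le 1$ already yields the sharper bound $n$ is a nice observation not present in the paper, though as you note the constant $2n$ suffices downstream.
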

\begin{proof}
Applying the triangle inequality, interpreting the $\ell^1$ norm, and using that $0\le p_i\le1$, we have that
\begin{align}
\left\|\sum_{m=0}^{n-1}\Pi^{(1)}(\bm{p}(t+m))\right\|_1&\le\sum_{m=0}^{n-1}\left\|\Pi^{(1)}(\bm{p}(t+m))\right\|_1\label{eq:03:11}\\
&=\sum_{m=0}^{n-1}\left(\max_{1\le j\le N}\sum_{i=1}^N\left|\frac{a_{ij} p_i}{k_j}-\frac{a_{ij}}{(k_j)^2}\sum_{l=1}^Na_{lj}p_l\right|\right)\nonumber\\
&\le\sum_{m=0}^{n-1}\left(\max_{1\le j\le N}\frac{\sum_{i=1}^Na_{ij}p_i}{k_j}+\frac{\sum_{i=1}^Na_{ij}\sum_{l=1}^Na_{lj}p_l}{(k_j)^2}\right)\nonumber\\
&=2\sum_{m=0}^{n-1}\max_{1\le j\le N} \left(\frac{\sum_{i=1}^Na_{ij}p_i}{k_j}\right)\nonumber\\
&\le2\sum_{m=0}^{n-1}\max_{1\le j\le N} \left(\frac{\sum_{i=1}^Na_{ij}}{k_j}\right)\nonumber\\
&=2\sum_{m=0}^{n-1}\max_{1\le j\le N}1=2n,\nonumber
\end{align}
which yields the desired result.
\end{proof}
With these results we can show show that, for sufficiently small nonlinearity, the operation of multiplying by the matrix $Q_n(\bm{p})$ is a contraction mapping on the space $\Omega$ using the $\ell^1$ distance.

\begin{lemma}\label{Lemma:Contraction}
Consider a network $G$ with primitive adjacency matrix $A$ and the nonlinear random walk defined by Eq.~(\ref{eq:02:01}) and Eq.~(\ref{eq:02:04}). Denote $n\ge1$ as the smallest possible integer satisfying $A^n$ is strictly positive. Then for sufficiently small (but non-zero) nonlinearity, i.e., small $|\alpha|$, the operator $Q_n(\bm{p})$, defined in Eq.~(\ref{eq:03:06}), is a contraction mapping on $\Omega$, defined in Eq.~(\ref{eq:02:02}), using the $\ell^1$ distance defined in Eq.~(\ref{eq:03:01}).
\end{lemma}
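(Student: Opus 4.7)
The plan is to apply the contraction mapping theorem to the map $\bm{p}\mapsto Q_n(\bm{p})\bm{p}$, so I need to exhibit a constant $c<1$ with $\|Q_n(\bm{p})\bm{p}-Q_n(\bm{q})\bm{q}\|_1\le c\|\bm{p}-\bm{q}\|_1$ for all $\bm{p},\bm{q}\in\Omega$ once $|\alpha|$ is small enough. The first move is the telescoping split
\[
Q_n(\bm{p})\bm{p}-Q_n(\bm{q})\bm{q}=Q_n(\bm{p})(\bm{p}-\bm{q})+\bigl(Q_n(\bm{p})-Q_n(\bm{q})\bigr)\bm{q},
\]
and I would estimate each piece separately in the $\ell^1$ norm. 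Observe that $\bm{p}-\bm{q}$ is a zero-sum vector and $Q_n(\bm{p})$ is column stochastic (being a product of column stochastic matrices), so both pieces are zero-sum.

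For the first piece I would invoke a Dobrushin-type ergodic-coefficient bound. By Lemma~\ref{Lemma:Positive}, $(\Pi^{(0)})^n$ is strictly positive; set $\delta_0=\min_{ij}[(\Pi^{(0)})^n]_{ij}>0$. Splitting any zero-sum vector $\bm{v}$ into its positive and negative parts $\bm{v}_\pm$ and using $[(\Pi^{(0)})^n\bm{v}_\pm]_i\ge\delta_0\|\bm{v}_\pm\|_1=\tfrac{\delta_0}{2}\|\bm{v}\|_1$ yields the standard estimate $\|(\Pi^{(0)})^n\bm{v}\|_1\le(1-N\delta_0)\|\bm{v}\|_1$. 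Combining this with the expansion in Eq.~(\ref{eq:03:07}), Lemma~\ref{Lemma:ell1} (which gives $\|(\Pi^{(0)})^{n-1}\|_1=1$), and Lemma~\ref{Lemma:Bounded}, one obtains
\[
\|Q_n(\bm{p})(\bm{p}-\bm{q})\|_1\le\bigl[(1-N\delta_0)+2n|\alpha|+\mathcal{O}(\alpha^2)\bigr]\|\bm{p}-\bm{q}\|_1.
\]

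For the second piece, the key observation from Eq.~(\ref{eq:03:07}) is that the $\bm{p}$-dependence of $Q_n$ enters only at order $\alpha$, and each $\Pi^{(1)}$ is linear in its vector argument. Because the iterates $\bm{p}(t+m)$ are Lipschitz functions of the seed $\bm{p}(t)$ (inherited from the smoothness of the exponentials in Eq.~(\ref{eq:02:04}) on the compact set $\Omega$), there is a network-dependent constant $L$ with $\|Q_n(\bm{p})-Q_n(\bm{q})\|_1\le L|\alpha|\,\|\bm{p}-\bm{q}\|_1$. Since $\|\bm{q}\|_1=1$, the second piece is bounded by $L|\alpha|\,\|\bm{p}-\bm{q}\|_1$. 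Adding the two contributions, the prefactor $(1-N\delta_0)+(2n+L)|\alpha|+\mathcal{O}(\alpha^2)$ is strictly less than $1$ once $|\alpha|$ is chosen sufficiently small and nonzero, which yields the desired contraction constant $c<1$ and completes the argument.

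The main obstacle I anticipate is making the Lipschitz estimate $\|Q_n(\bm{p})-Q_n(\bm{q})\|_1\lesssim|\alpha|\,\|\bm{p}-\bm{q}\|_1$ fully rigorous: $Q_n(\bm{p})$ is built from $n$ factors $\Pi(\bm{p}(t+m))$ whose arguments are themselves nonlinear functions of the seed $\bm{p}$, so one must recursively control how perturbations propagate through the trajectory and then expand the product, keeping only the $\mathcal{O}(\alpha)$ contribution. A secondary, more routine technicality is controlling the $\mathcal{O}(\alpha^2)$ remainders uniformly in $\bm{p}\in\Omega$, which follows from compactness of $\Omega$ together with the smoothness of the exponentials in Eq.~(\ref{eq:02:04}).
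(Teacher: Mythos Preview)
Your approach is essentially the same as the paper's: both extract the contraction from the strict positivity of $(\Pi^{(0)})^n$ via a Dobrushin/ergodic-coefficient argument (the paper writes $\pi_{ij}^{(0)n}=(1-N\epsilon)m_{ij}+\epsilon$ with $M$ stochastic, which is exactly your $\delta_0$ bound), and both absorb the $\alpha$-dependent corrections using Lemmas~\ref{Lemma:ell1} and~\ref{Lemma:Bounded}, leaving an $\mathcal{O}(\alpha^2)$ remainder controlled by compactness. The only structural difference is that you first telescope $Q_n(\bm{p})\bm{p}-Q_n(\bm{q})\bm{q}$ and then expand, whereas the paper expands $Q_n$ in $\alpha$ first and groups into its ``term A'' and ``term B''. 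Your treatment of the second piece $(Q_n(\bm{p})-Q_n(\bm{q}))\bm{q}$ is in fact more careful than the paper's corresponding step in Eq.~(\ref{eq:03:16}): the paper passes directly from $S(\bm{p})\bm{p}-S(\bm{q})\bm{q}$ to $\max_{\bm{x}}\|S(\bm{x})\|_1\,\|\bm{p}-\bm{q}\|_1$ without explicitly isolating the $S(\bm{p})-S(\bm{q})$ contribution, while you correctly flag that the iterates $\bm{p}(t+m)$ depend on the seed and that this propagation must be controlled recursively via a Lipschitz bound. So your decomposition is slightly cleaner, but the two arguments are the same in spirit and rest on the same lemmas.
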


\begin{proof}
We consider the $\ell^1$ distance between the action of $Q_n$ on two vectors $\bm{p}(t)$ and $\bm{q}(t)$, both in $\Omega$. In particular, we aim to show that $\|Q_n(\bm{p}(t))\bm{p}(t)-Q_n(\bm{q}(t))\bm{q}(t)\|_1\le c\|\bm{p}(t)-\bm{q}(t)\|_1$ for some constant $c$, where $0\le c<1$. From Eq.~(\ref{eq:03:07}) we have that
\begin{align}
\|Q_n&(\bm{p}(t))\bm{p}(t)-Q_n(\bm{q}(t))\bm{q}(t)\|_1=\left\|\left(\Pi^{(0)}\right)^n(\bm{p}(t)-\bm{q}(t))\right.\label{eq:03:12}\\
&\left.+\alpha\left(\Pi^{(0)}\right)^{n-1}\left(\left(\sum_{j=0}^{n-1}\Pi^{(1)}(\bm{p}(t+j))\right)\bm{p}(t)-\left(\sum_{j=0}^{n-1}\Pi^{(1)}(\bm{q}(t+j))\right)\bm{q}(t)\right)+\mathcal{O}(\alpha^2)\right\|_1,\nonumber
\end{align}
and applying the trsiangle inequality yields
\begin{align}
\|Q_n&(\bm{p}(t))\bm{p}(t)-Q_n(\bm{q}(t))\bm{q}(t)\|_1\le\overbrace{\left\|\left(\Pi^{(0)}\right)^n(\bm{p}(t)-\bm{q}(t))\right\|_1}^{\text{term A}}\label{eq:03:13}\\
&+\underbrace{\alpha\left\|\left(\Pi^{(0)}\right)^{n-1}\left(\left(\sum_{j=0}^{n-1}\Pi^{(1)}(\bm{p}(t+j))\right)\bm{p}(t)-\left(\sum_{j=0}^{n-1}\Pi^{(1)}(\bm{q}(t+j))\right)\bm{q}(t)\right)\right\|_1}_{\text{term B}}+\underbrace{\mathcal{O}(\alpha^2)}_{\text{term C}},\nonumber
\end{align}
so we can now treat terms A, B, and C separately. 

First we treat term A in Eq.~(\ref{eq:03:13}). Note that from \ref{Lemma:Positive} we have that $(\Pi^{(0)})^n$ is strictly positive. Therefore, we may choose some $\epsilon$ satisfying $0<\epsilon<N^{-1}$ to construct a matrix $M$ whose entries are given by $[M_{ij}]=m_{ij}=(\pi_{ij}^{(0)n}-\epsilon)/(1-N\epsilon)$ are non-negative, where $\pi_{ij}^{(0)n}=[(\Pi^{(0)})^n]_{ij}$. Note that $M$ is a transition matrix because its entries are non-negative and sum to one:
\begin{align}
\sum_{i=1}^Nm_{ij}=\frac{1}{1-N\epsilon}\sum_{i=1}^N(\pi_{ij}^{(0)n}-\epsilon)=\frac{1}{1-N\epsilon}-\frac{N\epsilon}{1-N\epsilon}=1.\label{eq:03:14}
\end{align}
Inspecting term A in Eq.~(\ref{eq:03:13}) and substituting in the entries of $P$ for those of $M$ allows us to bound it as follows:
\begin{align}
\left\|\left(\Pi^{(0)}\right)^n(\bm{p}(t)-\bm{q}(t))\right\|_1&=\sum_{i=1}^N\left|\left[\left(\Pi^{(0)}\right)^n\bm{p}(t)\right]_i-\left[\left(\Pi^{(0)}\right)^n\bm{q}(t)\right]_i\right|\label{eq:03:15}\\
&= \sum_{i=1}^N\left|\sum_{j=1}^N\pi_{ij}^{(0)n}q_j(t)-\pi_{ij}^{(0)n}q_j(t)\right|\nonumber\\
&= \sum_{i=1}^N\left|\sum_{j=1}^N(1-N\epsilon)m_{ij}(p_j(t)-q_j(t))\right|\nonumber\\
&\le(1-N\epsilon)\sum_{i=1}^N\sum_{j=1}^Nm_{ij}|p_j(t)-q_j(t)|\nonumber\\
&\le(1-N\epsilon)\sum_{j=1}^N|p_j(t)-q_j(t)|\sum_{i=1}^Nm_{ij}\nonumber\\
&=(1-N\epsilon)\sum_{j=1}^N|p_j(t)-q_j(t)|=(1-N\epsilon)\|\bm{p}(t)-\bm{q}(t)\|_1.\nonumber
\end{align}

We now shift our attention to term B in Eq.~(\ref{eq:03:11}). Using the properties of matrix norms, specifically that $\|A\bm{x}\|_1\le\|A\|_1\|\bm{x}\|_1$ for $N\times N$ matrices $A$ and $N$-dimensional vectors $\bm{x}$, we first write
\begin{align}
\alpha&\left\|\left(\Pi^{(0)}\right)^{n-1}\left(\left(\sum_{j=0}^{n-1}\Pi^{(1)}(\bm{p}(t+j))\right)\bm{p}(t)-\left(\sum_{j=0}^{n-1}\Pi^{(1)}(\bm{q}(t+j))\right)\bm{q}(t)\right)\right\|_1\label{eq:03:16}\\
&\hskip22ex\le\alpha\left\|\left(\Pi^{(0)}\right)^{n-1}\right\|_1\max_{\bm{x}(t)\in\Omega}\left\|\left(\sum_{m=0}^{n=1}\Pi^{(1)}(\bm{x}(t))\right)\right\|_1\left\|\bm{p}(t)-\bm{q}(t)\right\|_1.\nonumber
\end{align}
Next, because $(\Pi^{(0)})^n$ is a transition matrix, it follows from \ref{Lemma:ell1} that $\|(\Pi^{(0)})^{n-1}\|_1=1$, and \ref{Lemma:Bounded} implies that 
\begin{align}
\max_{\bm{x}(t)\in\Omega}\left\|\left(\sum_{m=0}^{n=1}\Pi^{(1)}(\bm{x}(t))\right)\right\|_1\le 2n.\label{eq:03:17}
\end{align}
Thus, we have that
\begin{align}
\alpha&\left\|\left(\Pi^{(0)}\right)^{n-1}\left(\left(\sum_{j=0}^{n-1}\Pi^{(1)}(\bm{p}(t+j))\right)\bm{p}(t)-\left(\sum_{j=0}^{n-1}\Pi^{(1)}(\bm{q}(t+j))\right)\bm{q}(t)\right)\right\|_1\label{eq:03:18}\\
&\hskip54ex\le \alpha(2n)\left\|\bm{p}(t)-\bm{q}(t)\right\|_1.\nonumber
\end{align}

Having treated terms A and B in Eq.~(\ref{eq:03:13}), we now insert the inequalities in Eq.~(\ref{eq:03:15}) and Eq.~(\ref{eq:03:18}) into Eq.~(\ref{eq:03:13}) to obtain
\begin{align}
\left\|Q_n(\bm{p}(t))\bm{p}(t)-Q_n(\bm{q}(t))\bm{q}(t)\right\|_1\le\left( (1-N\epsilon) +\alpha(2n)\right)\|\bm{p}(t)-\bm{q}(t)\|_1+\mathcal{O}(\alpha^2).\label{eq:03:19}
\end{align}
Inspecting Eq.~(\ref{eq:03:19}), we may choose $|\alpha|<N\epsilon/(2n)$ to obtain
\begin{align}
\left\|Q_n(\bm{p}(t))\bm{p}(t)-Q_n(\bm{q}(t))\bm{q}(t)\right\|_1\le c\|\bm{p}(t)-\bm{q}(t)\|_1+\mathcal{O}(\alpha^2).\label{eq:03:20}
\end{align}
with $0\le c<1$, where $c=1-N\epsilon+2\alpha n$, thus proving the the linearization of $Q_n$ is a contraction on $\Omega$ using the $\ell^1$ distance. Moreover, since we are considering the small nonlinearity regime, $|\alpha|$ may be reduced further to ensure that the terms of higher order, i.e., $\mathcal{O}(\alpha^2)$ may be neglected, thus completing the proof.
\end{proof}

Having asserted that, for a network with a primitive adjacency matrix, the forward mapping of the nonlinear random walk an appropriate number of iterations constitutes a contraction mapping, we prove that for a sufficiently small bias parameter the nonlinear random walk converges to a unique stationary state fixed point.

\begin{theorem}[Existence and Uniqueness of the Stable Stationary State Fixed Point]\label{Theorem:E&U}
Consider a network $G$ with primitive adjacency matrix $A$ and the nonlinear random walk defined by Eq.~(\ref{eq:02:01}) and Eq.~(\ref{eq:02:04}). Then for sufficiently small values of the nonlinearity parameter $|\alpha|$ there exists a unique stationary state fixed point $\bm{p}^*$ satisfying
\begin{align}
\bm{p^*}=\Pi(\bm{p^*})\bm{p^*}.\label{eq:03:21}
\end{align}
Moreover, $\bm{p}^*$ is globally attracting, with
\begin{align}
\lim_{t\to\infty}\bm{p}(t)=\bm{p}^*\label{eq:03:22}
\end{align}
for any initial $\bm{p}(0)\in\Omega$.
\end{theorem}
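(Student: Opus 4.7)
The plan is to apply the Banach (contraction mapping) fixed point theorem to the $n$-step iteration map, then propagate the conclusion back to the one-step map $T(\bm{p}):=\Pi(\bm{p})\bm{p}$ via a short algebraic argument and continuity. To set the stage, I would first observe that $\Omega$ is closed and bounded in $\mathbb{R}^N$, hence compact and in particular complete under the $\ell^1$ metric used throughout the section; so the hypotheses needed for the contraction mapping theorem in \ref{app:A} are in place. Next I would note the identity $T^n(\bm{p})=Q_n(\bm{p})\bm{p}$, which is essentially the definition \eqref{eq:03:06} of $Q_n$ evaluated along the trajectory generated by $T$ itself.

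Given these preliminaries, Lemma~\ref{Lemma:Contraction} says that the map $\bm{p}\mapsto Q_n(\bm{p})\bm{p}=T^n(\bm{p})$ is a contraction on $(\Omega,d)$ for $|\alpha|$ small enough. The contraction mapping theorem then yields a unique $\bm{p}^*\in\Omega$ with $T^n(\bm{p}^*)=\bm{p}^*$ and the global convergence $T^{kn}(\bm{p}(0))\to\bm{p}^*$ as $k\to\infty$ for every $\bm{p}(0)\in\Omega$.

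To upgrade this to a fixed point of $T$ itself, I would apply $T$ to both sides of $T^n(\bm{p}^*)=\bm{p}^*$ to get $T^n(T(\bm{p}^*))=T(\bm{p}^*)$; that is, $T(\bm{p}^*)$ is also a fixed point of $T^n$. By the uniqueness clause of the Banach theorem we must have $T(\bm{p}^*)=\bm{p}^*$, which is exactly \eqref{eq:03:21}. Uniqueness of the $T$-fixed point is immediate since any fixed point of $T$ is automatically a fixed point of $T^n$, and the latter was shown unique.

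For global attraction in \eqref{eq:03:22}, I would use continuity of $T$: the map $\bm{p}\mapsto\Pi(\bm{p})$ is continuous on $\Omega$ (the entries in \eqref{eq:02:04} are ratios of strictly positive exponentials on connected entries, and otherwise zero), so $T$ and each iterate $T^r$ are continuous. Writing any $t\ge 0$ as $t=kn+r$ with $0\le r<n$, we have $\bm{p}(t)=T^r\bigl(T^{kn}(\bm{p}(0))\bigr)$; since $T^{kn}(\bm{p}(0))\to\bm{p}^*$ and $T^r$ is continuous, this tends to $T^r(\bm{p}^*)=\bm{p}^*$ as $k\to\infty$, uniformly over the finite set of residues $r\in\{0,1,\dots,n-1\}$, giving the desired limit. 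The main obstacle I anticipate is not any single step but rather the bookkeeping in this last paragraph: one must be careful that the contraction was proved for the composite $Q_n$ rather than for $T$, so the passage from periodic convergence along the subsequence $\{kn\}$ to convergence of the full orbit needs continuity of the in-between iterates, which is why I single out the continuity of $\Pi(\bm{p})$ as the key ingredient there.
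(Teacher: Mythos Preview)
Your proposal is correct and follows essentially the same route as the paper: invoke Lemma~\ref{Lemma:Contraction} to make $T^n=Q_n(\cdot)(\cdot)$ a contraction on $(\Omega,\ell^1)$, apply the Banach fixed point theorem, and then upgrade from a fixed point of $T^n$ to a fixed point of $T$. The only cosmetic difference is in that upgrade step: you use the clean algebraic observation that $T(\bm{p}^*)$ is another $T^n$-fixed point and hence equals $\bm{p}^*$ by uniqueness, together with continuity of $T^r$ for full-orbit convergence, whereas the paper instead applies the contraction convergence to the shifted initial conditions $\bm{p}(1),\dots,\bm{p}(n-1)$ to force every residue-class subsequence to the same limit; both arguments are standard and equivalent here.
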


\begin{proof}
Let $n\ge1$ be the smallest integer such that $A^n$ is strictly positive. By \ref{Lemma:Contraction}, the matrix operator $Q_n$ as defined in Eq.~(\ref{eq:03:06}) for sufficiently small $|\alpha|$ is a contraction on $\Omega$ using the $\ell^1$ norm. By the contraction mapping theorem (\ref{Theorem:CMT}) there exists a unique fixed point $\bm{p}^*$ of the mapping $Q_n:\Omega\to\Omega$ satisfying $\bm{p}^*=Q_n(\bm{p}^*)\bm{p^*}$. Moreover, this fixed point is attracting for any initial condition. However, note that $Q_n$ maps the dynamics forward $n$ time steps, so this fixed point may in principle be a periodic orbit of the nonlinear random walk dynamics with a period between $1$ and $n$. Specifically, this implies that
\begin{align}
\lim_{m\to\infty}\bm{p}(mn)=\bm{p^*},\label{eq:03:23}
\end{align}
for any initial condition $\bm{p}(0)\in\Omega$. However, the theorem may also be applied to the initial conditions $\bm{p}(1)=\Pi(\bm{p}(0))\bm{p}(0)$, $\bm{p}(2)=\Pi(\bm{p}(1))\bm{p}(1)$, $\cdots$, and $\bm{p}(n-1)=\Pi(\bm{p}(n-2))\bm{p}(n-2)$. Thus, we have that 
\begin{align}
\lim_{m\to\infty}\bm{p}(k+mn)=\bm{p^*},\label{eq:03:24}
\end{align}
for all any $k=0,1,\dots,n-1$. Thus, every iterate of the period-$n$ fixed point is equal to $\bm{p}^*$. Therefore, the period of $\bm{p}^*$ in the nonlinear random walk dynamics is period one and a true fixed point that is globally attracting, completing the proof of the theorem.
\end{proof}

\begin{remark}
This result asserts that, provided that a network's adjacency matrix is primitive, the biasing parameter $\alpha$ can be tuned appropriately (but remaining non-zero) to ensure that the nonlinear random walk dynamics converge to a unique stationary state fixed point analogous to the linear case. While the proof above assumes that $|\alpha|\ll1$, in practice we find that often $|\alpha|$ take modestly larger values with the dynamics converging to a unique stationary state fixed point in the long run.
\end{remark}

\begin{figure}[htbp]
\centering
\includegraphics[width=0.98\textwidth]{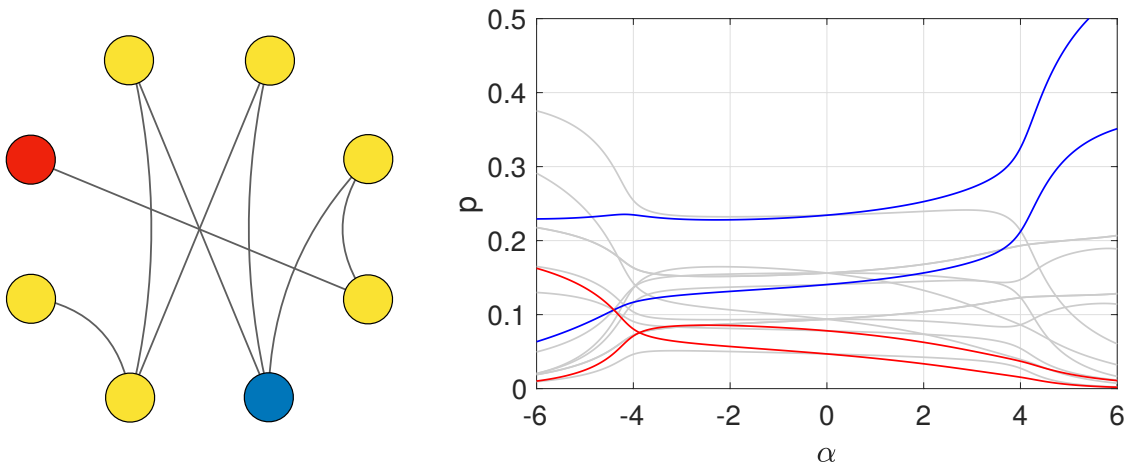}
\caption{{\bf The non-primitive case.}. Left: A network consisting of $N=8$ nodes with $M=8$ links with mean degree $\langle k\rangle=2$. Importantly, the network's adjacency matrix is non-primitive. Right: the steady-state probabilities $p_i$ (obtained after $5000$ iterations of the nonlinear random walk defined in Eq.~(\ref{eq:02:01}) and Eq.~(\ref{eq:02:04}) as a function of the bias parameter $\alpha$. The probabilities of the nodes colored blue and red are plotted in blue and red, with other probabilities plotted in gray.}\label{fig:02}
\end{figure}

As in the case of the linear random walk, we find that the nonlinear random walk on a network whose adjacency matrix is non-primitive does not have a unique stationary state fixed point to which the dynamics converge. To illustrate this this, consider the network illustrated in Figure~\ref{fig:02}, which has a non-primitive adjacency matrix. (To see this, note that this is a bipartite network.) Like the network illustrated in Figure~\ref{fig:01}, the network consists of $N=8$ nodes, $M=8$ links, and has a mean degree of $\langle k\rangle=2$. We plot as a function of $\alpha$ the steady-state probabilities $p_i$ (obtained after $5000$ iterations at each value of $\alpha$) in blue and red for the nodes colored blue and red, with other probabilities plotted in gray. Specifically, the branches obtained here are obtained by using an initial condition of uniform probabilities, i.e., $\bm{p}(0)=[N^{-1},\dots,N^{-1}]^T$ at $\alpha=0$ and numerically continuing the solution by increasing and decreasing $\alpha$ by small increments. Even at precisely $\alpha=0$ the long-term dynamics oscillate in a period-two pattern, which is also the case for $\alpha\ne0$, illustrating that the long-term dynamics tend to be periodic. We note also that, depending on the network structure, these periodic oscillations may be of higher periodicity, and for each non-primitive network and value of $\alpha$ different periodic solutions may be obtained by choosing different initial conditions (not shown).

\section{Asymptotic Analysis of the Stationary State}\label{sec:04}

Having asserted the existence and uniqueness of a stable stationary state fixed point for the nonlinear random walk dynamics in the weakly nonlinear regime on a network with primitive adjacency matrix, we now turn our attention to a question of a more quantitative nature. In particular, we seek an approximation for the stationary state fixed point in the weakly nonlinear regime. Because the weakly nonlinear regime is characterized by sufficiently small $|\alpha|$, we proceed perturbatively, taking $\alpha$ as a small parameter and seek an asymptotic series solution of which we compute the first few terms presented below.

\begin{theorem}[Asymptotic Series for the Stationary State]\label{Theorem:Asymptotics}
Consider a network $G$ with primitive adjacency matrix $A$ and the nonlinear random walk defined by Eq.~(\ref{eq:02:01}) and Eq.~(\ref{eq:02:04}). Then for sufficiently small values of the nonlinearity parameter $|\alpha|$ the stationary state solution is given by
\begin{align}
\bm{p}^*=\bm{p^{(0)}}+\alpha\bm{p}^{(1)}+\alpha^2\bm{p}^{(2)}+\mathcal{O}(\alpha^3),\label{eq:04:01}
\end{align}
where the leading-order term is given by
\begin{align}
\bm{p}^{(0)}=\frac{\bm{k}}{\sum_{l=1}^Nk_l},\label{eq:04:02}
\end{align}
and the correction terms $\bm{p}^{(1)}$ and $\bm{p}^{(2)}$ are the zero-sum solution to
\begin{align}
(I-AD^{-1})\bm{p}^{(1)}=\frac{(DA-AD^{-1}A^TD)\bm{1}}{\left(\sum_{l=1}^Nk_l\right)^2},\label{eq:04:03}
\end{align}
and
\begin{align}
(I-AD^{-1})\bm{p}^{(2)}&=\frac{\left(DAD^{-1}P^{(1)}-AD^{-2}P^{(1)}A^TD\right)\bm{1}}{\sum_{l=1}^Nk_l} + \frac{\left(P^{(1)}A-AD^{-1}A^TP^{(1)}\right)\bm{1}}{\sum_{l=1}^Nk_l}\label{eq:04:04}\\&+\frac{\left(D^2A - AD^{-1}AD^2\right)\bm{1}}{\left(\sum_{l=1}^Nk_l\right)^3}+\frac{\left(AD^{-2}T - DAD^{-1}A^TD\right)\bm{1}}{\left(\sum_{l=1}^Nk_l\right)^3},\nonumber
\end{align}
where $\bm{1}=(1,1,\dots,1)^T$, $T$ is the diagonal matrix with entries $[T]_{jj} = \left(\sum_l a_{lj}k_l\right)^2$, and $P^{(1)}$ is the diagonal matrix with entries $[P^{(1)}]_{jj} = p_j^{(1)}$.
\end{theorem}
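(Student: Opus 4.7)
The plan is to expand both the transition matrix $\Pi(\bm{p})$ and the fixed point $\bm{p}^*$ in powers of $\alpha$, substitute into the fixed-point equation $\bm{p}^*=\Pi(\bm{p}^*)\bm{p}^*$, and match coefficients of each power of $\alpha$. Taylor-expanding $e^{\alpha p_i}$ and $\sum_l a_{lj}e^{\alpha p_l}$ through second order and dividing gives $\Pi(\bm{p})=\Pi^{(0)}+\alpha\Pi^{(1)}(\bm{p})+\alpha^2\Pi^{(2)}(\bm{p})+\mathcal{O}(\alpha^3)$, where $\Pi^{(0)}=AD^{-1}$ with $D=\text{diag}(k_j)$, the entries of $\Pi^{(1)}(\bm{p})$ are already given in Eq.~(\ref{eq:03:04}) and are linear in $\bm{p}$, and $\Pi^{(2)}(\bm{p})$ is a quadratic function of $\bm{p}$ with entries $\pi_{ij}^{(2)}(\bm{p})=(a_{ij}/k_j)(p_i^2/2 - p_iS_j/k_j + S_j^2/k_j^2 - R_j/(2k_j))$, where $S_j=\sum_l a_{lj}p_l$ and $R_j=\sum_l a_{lj}p_l^2$. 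A key simplification is that $\Pi^{(1)}$ is linear in its argument, so $\Pi^{(1)}(\bm{p}^{(0)}+\alpha\bm{p}^{(1)}+\cdots)=\Pi^{(1)}(\bm{p}^{(0)})+\alpha\Pi^{(1)}(\bm{p}^{(1)})+\cdots$, which keeps the bookkeeping tractable.

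At order $\alpha^0$ the balance reduces to $\bm{p}^{(0)}=AD^{-1}\bm{p}^{(0)}$; since $AD^{-1}\bm{k}=A\bm{1}=\bm{k}$ (using $[A\bm{1}]_i=k_i$), imposing $\sum_i p_i^{(0)}=1$ yields $\bm{p}^{(0)}=\bm{k}/K$ with $K=\sum_l k_l$, matching Eq.~(\ref{eq:04:02}). At order $\alpha^1$, balance gives $(I-AD^{-1})\bm{p}^{(1)}=\Pi^{(1)}(\bm{p}^{(0)})\bm{p}^{(0)}$; substituting $\pi_{ij}^{(1)}(\bm{p})=(a_{ij}/k_j)(p_i-k_j^{-1}\sum_l a_{lj}p_l)$ and $p_j^{(0)}=k_j/K$ and reassembling entry-by-entry in matrix form produces the right-hand side $(DA-AD^{-1}A^TD)\bm{1}/K^2$, which is Eq.~(\ref{eq:04:03}). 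At order $\alpha^2$, matching yields
\begin{equation*}
(I-AD^{-1})\bm{p}^{(2)}=\Pi^{(1)}(\bm{p}^{(0)})\bm{p}^{(1)}+\Pi^{(1)}(\bm{p}^{(1)})\bm{p}^{(0)}+\Pi^{(2)}(\bm{p}^{(0)})\bm{p}^{(0)},
\end{equation*}
and each of the three right-hand-side terms is rewritten in matrix form: the first two (arising from the linearity of $\Pi^{(1)}$) produce the summands involving $P^{(1)}$, and the third (the genuinely second-order piece) produces the summands involving $T$ and $D^2$, together assembling into Eq.~(\ref{eq:04:04}).

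Each equation features the singular operator $I-AD^{-1}$, whose right null space is spanned by $\bm{k}$ and whose left null space is spanned by $\bm{1}$. By the Fredholm alternative, a solution $\bm{p}^{(k)}$ exists iff the right-hand side has zero component sum; this is automatic because each column of $\Pi^{(k)}(\bm{x})$ sums to zero for $k\geq 1$---the order-1 case is Eq.~(\ref{eq:03:05}), and the order-2 case follows identically by expanding $\bm{1}^T\Pi(\bm{p})=\bm{1}^T$ in powers of $\alpha$. Solutions are then unique up to an additive multiple of $\bm{k}$, and the normalization $\sum_i p_i^*=1$ forces $\sum_i p_i^{(k)}=0$ for all $k\geq 1$, selecting the \emph{zero-sum} solution uniquely as stated. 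The main obstacle is purely algebraic: the term $\Pi^{(2)}(\bm{p}^{(0)})\bm{p}^{(0)}$ unpacks into four scalar contributions from $p_i^2/2$, $p_iS_j/k_j$, $S_j^2/k_j^2$, and $R_j/(2k_j)$, each of which must be carefully rewritten using $D$, $A$, $T$, and $P^{(1)}$ to reach the compact matrix expressions in Eq.~(\ref{eq:04:04}). The conceptual structure is clean; the real effort lies entirely in this bookkeeping.
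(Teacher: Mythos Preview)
Your proposal is correct and takes essentially the same approach as the paper: expand both $\Pi(\bm{p})$ and $\bm{p}^*$ in powers of $\alpha$, substitute into the fixed-point relation, and match orders, with the zero-sum constraint (which you justify via the Fredholm alternative; the paper relegates this solvability discussion to the remark following the proof) selecting the unique correction at each order. One bookkeeping caution: your $\Pi^{(2)}$ correctly carries the factors of $\tfrac12$ from $e^{\alpha p}=1+\alpha p+\tfrac12\alpha^2 p^2+\cdots$, whereas the paper's expansion in Eq.~(\ref{eq:04:07}) writes the quadratic term without them, so when you carry out the final assembly your $K^{-3}$ summands will differ from those displayed in Eq.~(\ref{eq:04:04}) by these factors---this is a slip in the paper's algebra, not in your argument.
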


\begin{proof}
We seek a stationary state solution to the dynamics of Eq.~(\ref{eq:02:01}) and Eq.~(\ref{eq:02:04}) of the form $\bm{p}^*=\Pi(\bm{p}^*)\bm{p^*}$, whose $i^{\text{th}}$ component is given by
\begin{align}
p_i^*=\sum_{j=1}^N\pi_{ij(\bm{p}^*)}p_j.\label{eq:04:05}
\end{align}
Moreover, we consider series solutions of the form in Eq.~(\ref{eq:04:01}), whose $i^{\text{th}}$ component is given by
\begin{align}
p_i^*=p_i^{(0)}+\alpha p_i^{(1)}+\alpha^2 p_i^{(2)}+\mathcal{O}(\alpha^3).\label{eq:04:06}
\end{align}
Next we expand the terms of the transition matrix $\Pi(\bm{p}^*)$, using that $\text{exp}(\alpha p)=1+\alpha p+\alpha^2 p^2+\mathcal{O}(\alpha^3)$, which yields
\begin{align}
\pi_{ij}(\bm{p})&=\frac{a_{ij}e^{\alpha p_i}}{\sum_{l=1}^N a_{lj} e^{\alpha p_l}}=\frac{a_{ij}}{k_j}+\alpha\frac{a_{ij}}{k_j}\left(p_i-\frac{1}{k_j}\sum_{l=1}^Na_{lj}p_l\right)\label{eq:04:07}\\&\hskip4ex+\alpha^2\frac{a_{ij}}{k_j}\left(p_i^2-\frac{1}{k_j}\sum_{l=1}^Na_{lj}p_l^2+\frac{1}{k_j^2}\left(\sum_{l=1}^Na_{lj}p_l\right)^2-\frac{p_i}{k_j}\sum_{l=1}^Na_{lj}p_l\right)+\mathcal{O}(\alpha^3).\nonumber
\end{align}
We then insert Eq.~(\ref{eq:04:06}) into Eq.~(\ref{eq:04:07}) and subsequently Eq.~(\ref{eq:04:06}) and Eq.~(\ref{eq:04:07}) into Eq.~(\ref{eq:04:05}) which yields a complicated expression then collect terms at different powers of $\alpha$. Starting with the leading order terms, i.e., terms of order $\mathcal{O}(\alpha^0)$, yields the expression
\begin{align}
p_i^{(0)}=\sum_{j=1}^N\frac{a_{ij}}{k_j}p_j^{(0)},\label{eq:04:08}
\end{align}
which is solved by any $p_i\propto k_i$. Normalizing to ensure that $\sum_{l}p_l=1$ yields $p_i=k_i/\sum_lk_l$, or in vector form, the expression given in Eq.~(\ref{eq:04:02}).

Before proceeding to the linear correction term, we reexamine our proposed solution, i.e., Eq.~(\ref{eq:04:01}) [or Eq.~(\ref{eq:04:06})]. Since the leading order term $\bm{p}^{(0)}$ is a probability vector, i.e., sums to one, to ensure that the asymptotic series remains a probability vector we constrain the higher order correction term, i.e., $\bm{p}^{(1)}$, $\bm{p}^{(2)}$, etc., to sum to zero. Then, checking terms of order $\mathcal{O}(\alpha^1)$ yields
\begin{align}
p_i^{(1)}=\sum_{j=1}^N\frac{a_{ij}}{k_j}p_j^{(1)}+\frac{1}{\left(\sum_{l=1}^Nk_l\right)^2}\sum_{j=1}^Na_{ij}\left(k_i-\frac{1}{k_j}\sum_{l=1}^Na_{lj}k_l\right).\label{eq:04:09}
\end{align}
Rearranging to put entries of $\bm{p}^{(1)}$ on the left hand side and expressing Eq.~(\ref{eq:04:09}) in vector form yields the desired express given in Eq.~(\ref{eq:04:03}).

Finally, checking terms of order $\mathcal{O}(\alpha^2)$ yields
\begin{align}
p_i^{(2)}=\sum_{j=1}^N\frac{a_{ij}}{k_j}p_j^{(2)}&+\frac{1}{\sum_{l=1}^Nk_l}\sum_{j=1}^N\frac{a_{ij}}{k_j}q_j\left(k_i-\frac{1}{k_j}\sum_{l=1}^Na_{lj}k_l\right)\label{eq:04:10}\\
&+\frac{1}{\sum_{l=1}^Nk_l}\sum_{j=1}^Na_{ij}\left(p_i^{(1)}-\frac{1}{k_j}\sum_{l=1}^Na_{lj}p_l^{(1)}\right)\nonumber\\
&+\frac{1}{\left(\sum_{l=1}^Nk_l\right)^3}\sum_{j=1}^Na_{ij}\left(k_i^2-\frac{1}{k_j}\sum_{l=1}^Na_{lj}k_l^2\right)\nonumber\\
&+\frac{1}{\left(\sum_{l=1}^Nk_l\right)^3}\sum_{j=1}^N\frac{a_{ij}}{k_j}\left(\frac{1}{k_j}\left(\sum_{l=1}^Na_{lj}k_l\right)^2-k_i\sum_{l=1}^Na_{lj}k_l\right).\nonumber
\end{align}
Rearranging to put entries of $\bm{p}^{(2)}$ on the left hand side and expressing Eq.~(\ref{eq:04:10}) in vector form, where $P^{(1)}=\text{diag}(p_1^{(1)},p_2^{(1)},\dots,p_N^{(1)})$ and $T=\text{diag}((\sum_{l}a_{l1}k_l)^2,(\sum_{l}a_{l2}k_l)^2,\dots,\sum_{l}a_{lN}k_l)^2)$ yields the desired express given in Eq.~(\ref{eq:04:04}), completing the proof.
\end{proof}

\begin{remark}
The computation of the approximation of the stationary state fixed point solution in Eq.~(\ref{eq:04:01}), in particular the correction terms $\bm{p}^{(1)}$ and $\bm{p}^{(2)}$ given in Eq.~(\ref{eq:04:03}) and Eq.~(\ref{eq:04:04}) deserves some additional discussion. In particular, we note that the matrix $AD^{-1}$ is the transition matrix $\Pi$ for the unbiased linear random walk for the network with adjacency matrix $A$. Thus, $AD^{-1}$ has at least one eigenvalue $\lambda=1$ (precisely one such eigenvalue if $A$ is primitive), so that the matrix $I-AD^{-1}$ that appears on the left hand side of Eq.~(\ref{eq:04:03}) and Eq.~(\ref{eq:04:04}) has an eigenvalue of $\lambda=0$ and is singular. Thus, no unique solution exists to either Eq.~(\ref{eq:04:03}) or Eq.~(\ref{eq:04:04}). However, the requirement that $\bm{p}^{(1)}$ and $\bm{p}^{(2)}$ sum to zero adds an additional constraint that allows us to find the correct solution to our problem. 

Moreover, the singular nature of the matrix $I-AD^{-1}$ complicates the computation of $\bm{p}^{(1)}$ and $\bm{p}^{(2)}$, but can be overcome by considering the following fixed point iteration schemes. Specifically, we find that in practice it is convenient to solve Eq.~(\ref{eq:04:03}) and Eq.~(\ref{eq:04:04}) using
\begin{align}
\bm{p}_{n+1}^{(1)}=AD^{-1}\bm{p}_n^{(1)}+\frac{(DA-AD^{-1}A^TD)\bm{1}}{\left(\sum_{l=1}^Nk_l\right)^2},\label{eq:04:11}
\end{align}
and
\begin{align}
\bm{p}_{n+1}^{(2)}&=AD^{-1}\bm{p}_n^{(2)}+\frac{\left(DAD^{-1}P^{(1)}-AD^{-2}P^{(1)}A^TD\right)\bm{1}}{\sum_{l=1}^Nk_l} + \frac{\left(P^{(1)}A-AD^{-1}A^TP^{(1)}\right)\bm{1}}{\sum_{l=1}^Nk_l}\label{eq:04:12}\\&+\frac{\left(D^2A - AD^{-1}AD^2\right)\bm{1}}{\left(\sum_{l=1}^Nk_l\right)^3}+\frac{\left(AD^{-2}T - DAD^{-1}A^TD\right)\bm{1}}{\left(\sum_{l=1}^Nk_l\right)^3}.\nonumber
\end{align}
We note first that both iterations converge to a solution because the eigenvalues of $AD^{-1}$ satisfy $|\lambda|<1$ except for precisely one eigenvalue $\lambda=1$ (which gives rise to the non-uniqueness of solutions). Moreover, it can be checked that the non-homogeneous terms on the right hand side of Eq.~(\ref{eq:04:11}) and Eq.~(\ref{eq:04:12}) sum to zero, and multiplication by $AD^{-1}$ preserved vector sums, so provided that the initial conditions $\bm{p}_0^{(1)}$ and $\bm{p}_0^{(2)}$ sum to zero (for instance, we use the zero vector), the iterations converge to the desired solutions.
\end{remark}

\begin{figure}[htbp]
\centering
\includegraphics[width=0.49\textwidth]{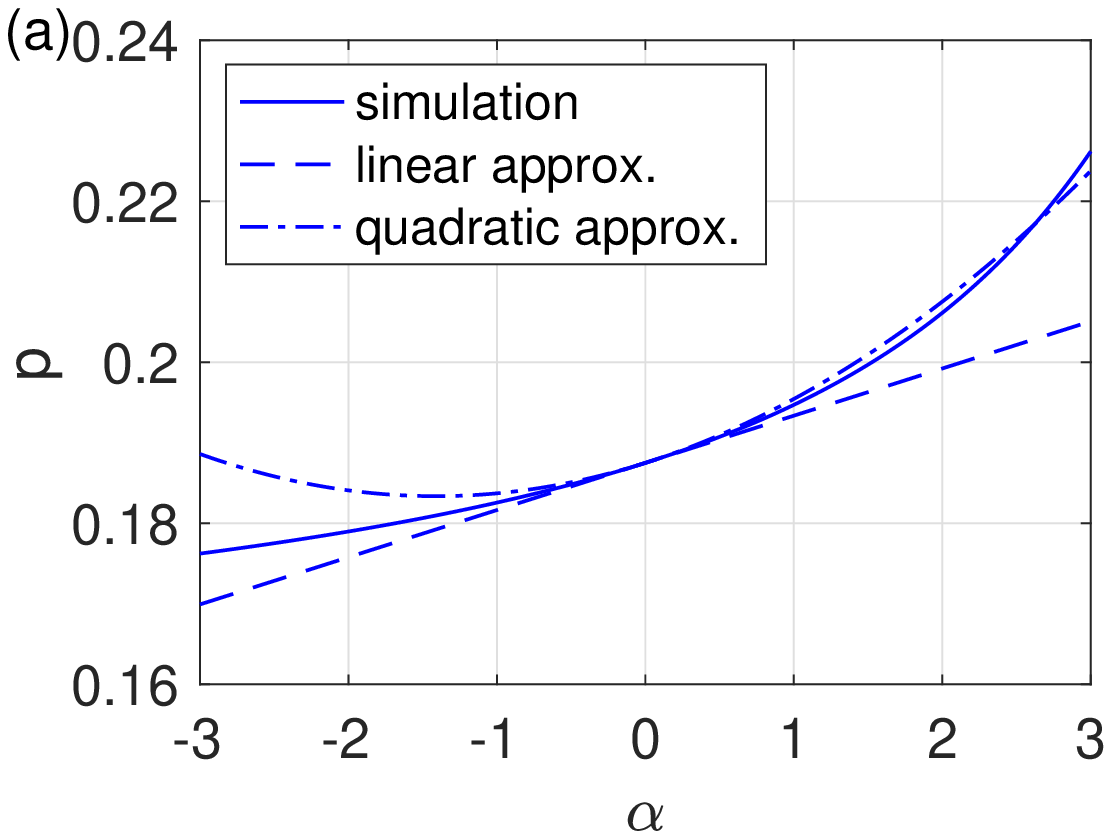}
\includegraphics[width=0.49\textwidth]{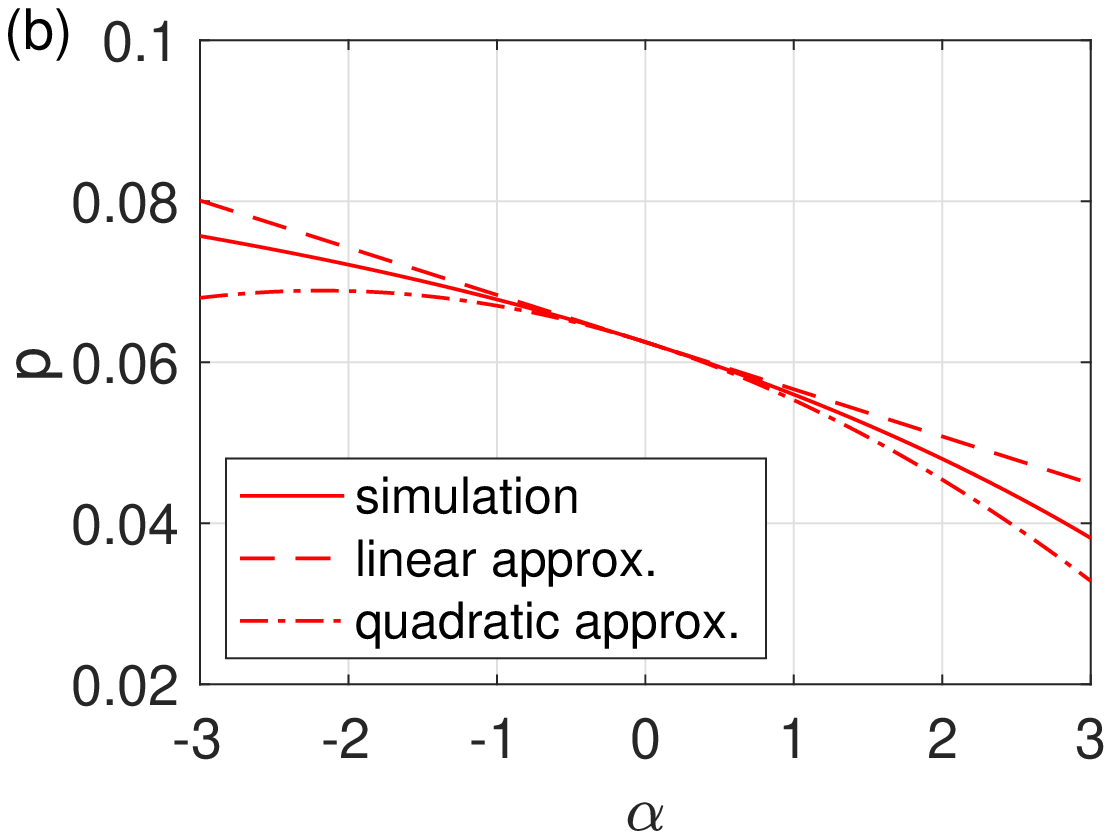}
\caption{{\bf Approximation of the stationary state fixed point solution in the weakly nonlinear regime}. For the nodes colored (a) blue and (b) red in the network illustrated in Figure~\ref{fig:01}, the stationary state fixed point solution obtained by simulations (solid curves) compared to the linear (dashed curves) and quadratic (dot-dashed curves) approximations obtained from Eq.~(\ref{eq:04:01}).}\label{fig:03}
\end{figure}

Next we compare the analytical approximations obtained for the stationary state fixed point in the weakly nonlinear regime to the true solution. We consider the primitive network illustrated in Figure~\ref{fig:01}, plotting in Figure~\ref{fig:03} the true solutions obtained from simulations using solid curves and comparing these to the linear and quadratic approximations, plotted with dashed and dot-dashed curves, respectively. Results are shown for the high- and low-degree nodes colored blue and red in the original network illustrations, which are plotted in panels (a) and (b), respectively. Results are plotted for $\alpha$ between $-3$ and $3$ and we have zoomed-in to each curve to see the finer features. In particular, we note that for $\alpha$ roughly in the range $(-1,1)$ both the linear and quadratic approximations capture the dynamics, and as $|\alpha|$ increases beyond this range the approximations begin to break down, as expected.

\section{Instability and Period-Doubling for Negative Bias}\label{sec:05}

We now turn our attention to the dynamics in the strongly nonlinear regime where more exotic nonlinear effects occur. We begin with sufficiently negative values of the bias parameter $\alpha$, where we see from Figure~\ref{fig:01} that the stationary state fixed point gives way to a periodic orbit. In particular, as $\alpha$ is decreased, the behavior of each $p_i$ in Figure~\ref{fig:01} bears the signature of a period-doubling bifurcation, which we will now characterize, at some critical bias value $\alpha_c$. To do so, we require the continuation of the stationary state fixed point beyond, where presumably the solution still exists but is unstable. We then seek to confirm that the solution in fact loses stability at $\alpha_c$ using a linear stability analysis.

To begin with this process, we require the Jacobian matrix $DF(\bm{p})$ of the iteration function $F(\bm{p})=\Pi(\bm{p})\bm{p}$, whose entries are given by
\begin{align}
DF_{ij}(\bm{p}) = \left\{\begin{array}{rl}\sum_{j=1}^N\frac{a_{ij}\alpha e^{\alpha p_i}\left[\left(\sum_{l=1}^Na_{lj}e^{\alpha p_l}\right)-e^{\alpha p_i}\right]}{\left(\sum_{l=1}^Na_{lj}e^{\alpha p_l}\right)^2}p_j&\text{if }i = j,\\
\frac{a_{ij}e^{\alpha p_i}}{\sum_{l=1}^Na_{lj}e^{\alpha p_l}} - \alpha\sum_{k=1}^N\frac{a_{ik}a_{kj}p_ke^{\alpha p_i}e^{\alpha p_j}}{\left(\sum_{l=1}^Na_{lk}e^{\alpha p_l}\right)^2}&\text{if }i\ne j.
\end{array}\right.
\end{align}
We note that $DF(\bm{p})$ is denser than $A$ or $\Pi(\bm{p})$, i.e., has more non-zero entries, due to the effect that an infinitesimal perturbation to one entry $i$ of $\bm{p}$ may have on another entry $j$ of $\bm{p}$ where $i$ and $j$ are not necessarily network neighbors. We then obtain the continuation of the stationary state fixed point solution $\bm{p}^*$ using Newton's method~\cite{Atkinson2008} to find the roots of the function $G(\bm{p})=F(\bm{p})-\bm{p}=\Pi(\bm{p})\bm{p}-\bm{p}$. Specifically, we begin at $\alpha=0$, obtain the solution $\bm{p}^*$ using Newton's method, then decrease $\alpha$ by a small amount and repeat. In panel (a) of Figure~\ref{fig:04} we present our results, using the network illustrated in Figure~\ref{fig:01} and plotting the solutions obtained from simulations using solid curves and the stationary-state fixed point solution obtained by Newton's method using dashed curves. Again, we highlight the simulation results for the high- and low-degree nodes colored blue and red in blue and red curves. We note that the results obtained by Newton's method perfectly match our observations from simulations in the weakly nonlinear regime, bisecting the period-doubling as $\alpha$ is decreased beyond the critical value $\alpha_c$, which we find below. 

\begin{figure}[htbp]
\centering
\includegraphics[width=0.95\textwidth]{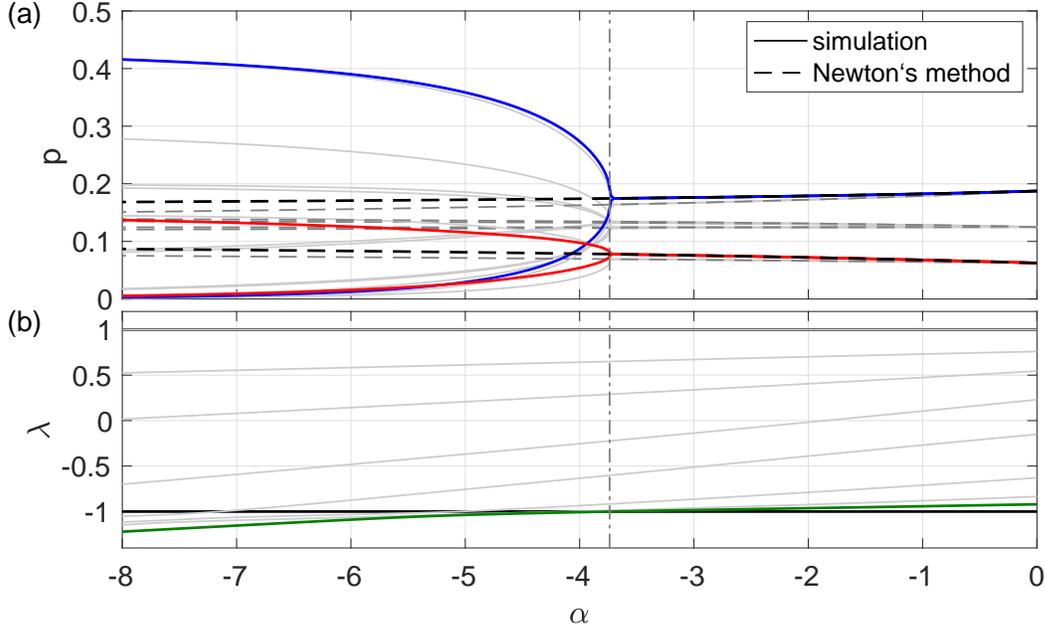}
\caption{{\bf Period-doubling in the nonlinear random walk}. Panel (a): For the network illustrated in Figure~\ref{fig:01}, the stationary state dynamics obtained from simulations (solid curves) and the stationary state fixed point obtained by Newton's methods (dashed curves). The probabilities for the nodes colored blue and red are plotted in blue and red. Panel (b): the eigenvalues of the Jacobian $DF(\bm{p}^*)$, where $\bm{p}^*$ is the stationary state fixed point obtained using Newton's method. The most negative eigenvalue is plotted in green. In both panels (a) and (b) the critical value $\alpha_c$ is denoted with a dot-dashed vertical line.}\label{fig:04}
\end{figure}

Next we seek to evaluate the stability of the stationary state fixed point solution $\bm{p}^*$ found using Newton's method. To do so, we inspect the eigenvalues of the Jacobian evaluated at $\bm{p}^*$, i.e., $DF(\bm{p}^*)$, denoting the $i^{\text{th}}$ eigenvalue of $DF(\bm{p}^*)$ as $\alpha$ is varied by $\lambda_i(\alpha)$. It follows that the stationary state fixed point $\bm{p}^*$ is stable if $|\lambda_i(\alpha)|<1$ for all $i=1,\dots,N$. For the nonlinear random walk, however, there is precisely one eigenvalue that is exactly one for all $\alpha$. This follows from the center manifold corresponding to scaling solutions: under the dynamics of Eq.~(\ref{eq:02:01}) and Eq.~(\ref{eq:02:04}) if $\bm{p}(t)$ is a non-zero solution, then so is $c\bm{p}(t)$ for any $c\in\mathbb{R}$. Because we consider the subspace $\Omega$ of probability vector in $\mathbb{R}^{N}$ we ignore this manifold since such scalings would yield solutions that are no longer in $\Omega$. 

Turning back to our example, we calculate the eigenvalues $\lambda_i(\alpha)$ of $DF(\bm{p}^*)$ as $\alpha$ is varied, where $\bm{p}^*$ is the stationary state fixed point obtained by Newton's method, and plot the results in panel (b) of Figure~\ref{fig:04}. First, we see that there is in fact one eigenvalue $\lambda(\alpha)=1$, while the other seven satisfy $\lambda_i(\alpha)<1$. (For this particular case we find that all eigenvalues are real, but in general eigenvalues may be complex since $DF(\bm{p}^*)$ is typically not symmetric.) Next, for $\alpha$ in the weakly nonlinear regime we find that all eigenvalues satisfy $\lambda(\alpha)_i>-1$, signifying that the stationary state fixed point is stable. However, when $\alpha$ is decreased to be sufficiently negative the most negative eigenvalue (plotted in green) crosses $-1$ so that one or more eigenvalues satisfy $\lambda_i(\alpha)<-1$, signifying that the stationary state fixed point loses stability. Moreover, the crossing of a real eigenvalue through $-1$ signifies that this is in fact a period-doubling bifurcation. Using these numerical results we then find $\alpha_c\approx-3.7375$ by evaluating at what $\alpha$ the most negative eigenvalue crosses $-1$. In both panels (a) and (b) we denote $\alpha_c$ with a dot-dashed vertical line, which matches perfectly with the onset of periodic orbits in our numerical simulations. Before proceeding to the next section, we note that while the specifics of the dynamics observed tend to depend on the underlying network structure, qualitatively similar transitions and period-doubling bifurcations can be found in the other networks we have considered (not shown).

\section{Multistability for Positive Bias}\label{sec:06}

Finally we turn our attention to the dynamics in the strongly nonlinear regime, but for positive values of the bias parameter. While in this region of parameter space we do not find any evidence of period-doubling or loss of stability of the fixed point solution, we do observe another phenomenon indicative of nonlinear dynamics: multistability. As seen in Figure~\ref{fig:01} a stable stationary state fixed point branch can be obtained simply by slowly increasing $\alpha$ as the dynamics are stepped forward. For the examples we have considered this remains true for values of $\alpha$ far larger than those shown in Figure~\ref{fig:01}. However, we also find that, for sufficiently large $\alpha$, other stable stationary state fixed point solutions can be found by varying the initial conditions for the nonlinear random walk.

\begin{figure}[htbp]
\centering
\includegraphics[width=0.94\textwidth]{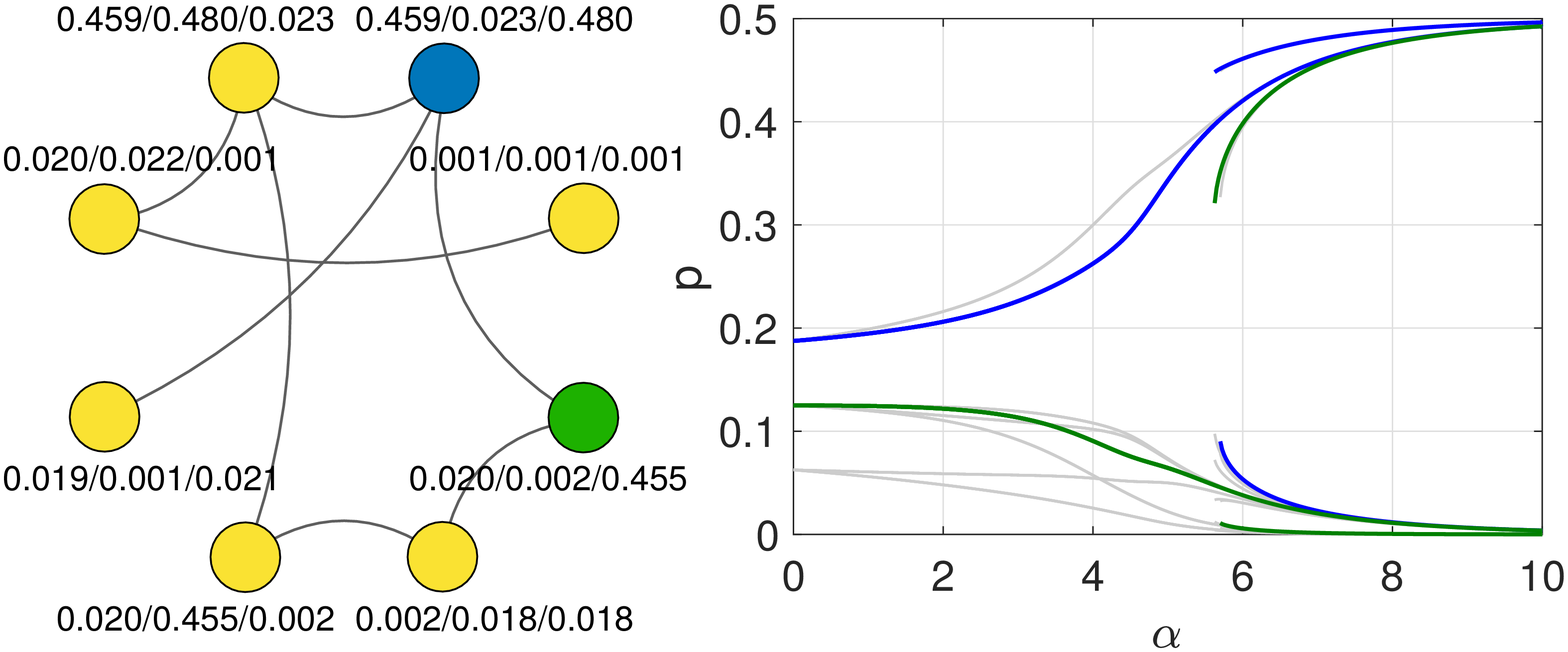}
\caption{{\bf Multistability in the nonlinear random walk}. Left: The network illustrated in Figure~\ref{fig:01} labeled with three stable stationary state fixed points of the nonlinear random walk for bias parameter $\alpha=7$. Right: the probabilities $p_i$ of the three stable stationary state fixed points as a function of the bias parameter $\alpha$. The probabilities of the nodes colored blue and green are plotted in blue and green, with other probabilities plotted in gray.}\label{fig:05}
\end{figure}

We demonstrate the emergence of multistability in the nonlinear random walk using the same network as that illustrated in Figure~\ref{fig:01}. In Figure~\ref{fig:05} we illustrate this network again, this time using different nodes in the network to highlight the features of the dynamics. We introduce three different initial conditions corresponding to starting all the random walkers at the same node, i.e., setting all $p_i(0)=0$ except for one node with $p_i(0)=1$. Specifically, we consider starting all the random walkers at the top-right node, the next node clockwise to this node, and finally the node at the bottom left. For each of the initial conditions we then iterate the dynamics starting at $\alpha=10$ and decrease $\alpha$ in small increments after reaching steady-state at each value considered. We then plot the result of the three different simulations, highlighting the nodes colored blue and green with blue and green curves. Other probabilities are plotted in gray. In particular, the dynamics for each of the three different initial conditions converge to different stable stationary state fixed points. To better match the dynamics to the network structure, we then label each node with the three different stationary state fixed point values taken for $\alpha=7$. 

The dynamics observed in this example contain several interesting nonlinear features. First, we only observe multistability for sufficiently large $\alpha$ values. In fact, as $\alpha$ is decreased from $10$, the three different branches eventually collapse on top of one another near $\alpha\approx 5.7$. Moreover, this collapse occurs discontinuously: while one of the branches is precisely the branch obtained by continually increasing $\alpha$ (and corresponds to that seen in Figure~\ref{fig:01}), the two others snap back to this continuous branch in what appears to be a discontinuous jump. Second, the values of the specific nodes in the network vary between the three stationary state fixed points found. For instance, along the continuous branch the two nodes with largest degree ($k=3$) have the two largest probabilities, but this is not true in the two other solutions. Instead, each of the high degree nodes take on a relatively small probability in steady-state for one of the two cases while the larger probability is transferred to another node of smaller degree. Similar to the dynamics in the strongly nonlinear regime with negative bias, while the specifics of the dynamics vary greatly depending on the underlying network structure, we find that in other network structure (not shown) qualitatively similar properties persist, including multistability only for sufficiently large $\alpha$, discontinuous branches, and interesting patterns of steady-state probabilities. 

To explore more deeply the properties of the multistability described above we employ some broader numerical searches of the state-space for the nonlinear random walk and characterize basin stability properties of each fixed point found. For a general dynamical system with an attractor denoted $a$ the basin stability of $a$ describes the effective size (via a measure) of the basin of attraction for $a$, denoted $\mathcal{B}(a)$~\cite{Schultz2017NJP}. Thus, the basin stability of $a$, denoted $S_{\mathcal{B}(a)}$ can be written
\begin{align}
S_{\mathcal{B}(a)}=\mu(\mathcal{B}(a)),\label{eq:06:01}
\end{align}
where $\mu$ is an appropriately defined measure over an appropriately chosen domain that contains the collection of basins of attraction of all attractors. Often, $\mu$ is taken to to be proportional to a volume so that $\mu(\mathcal{B}(a))$ is the fraction of the state space corresponding to $\mathcal{B}(a)$. For the specific case of nonlinear random walks studied here, we take $\mu$ to be the normalized volume of $\Omega$ defined in Eq.~(\ref{eq:02:02}) (note that $\Omega$ is a bounded, compact $N-1$-dimensional subset of $\mathbb{R}^N$) so that $S_{\mathcal{B}(a)}=\mu(\mathcal{B}(a))$ gives the fraction of $\Omega$ occupied by $B(a)$. Moreover, for non-negative $\alpha$ we only find attractors that are fixed points, i.e., $a=\bm{p}^*$.

\begin{figure}[htbp]
\centering
\includegraphics[width=0.549\textwidth]{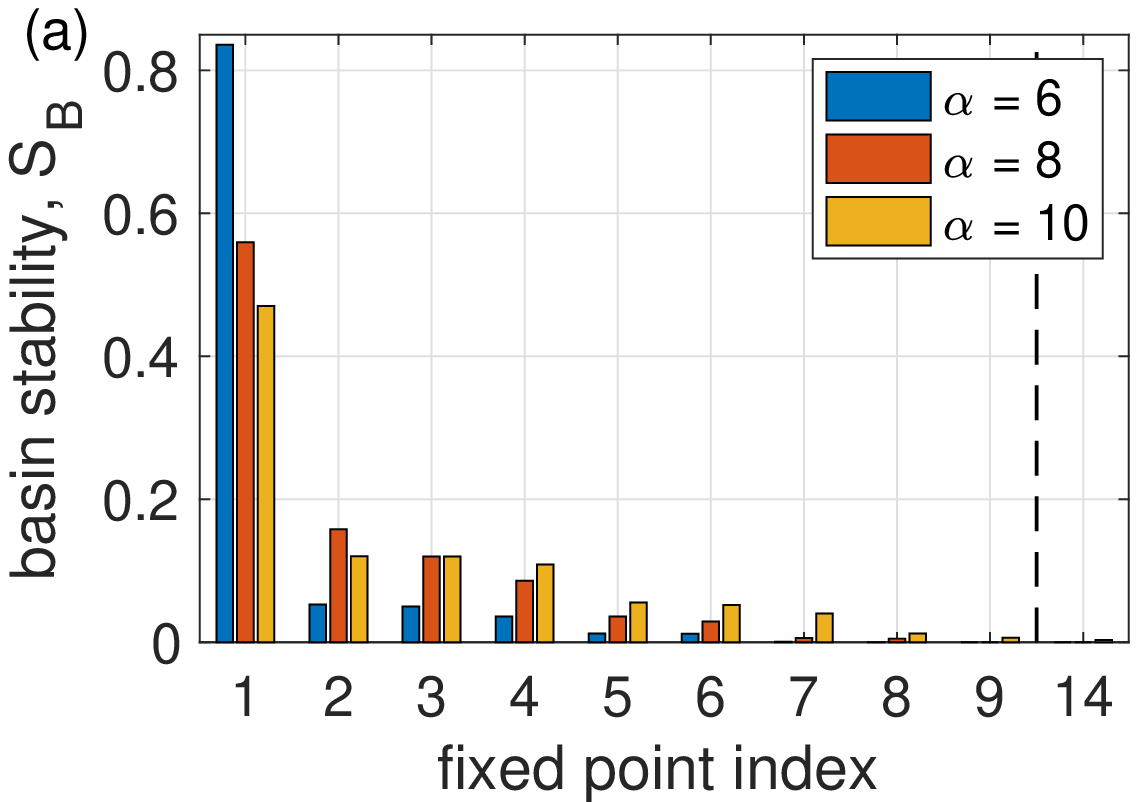}
\includegraphics[width=0.444\textwidth]{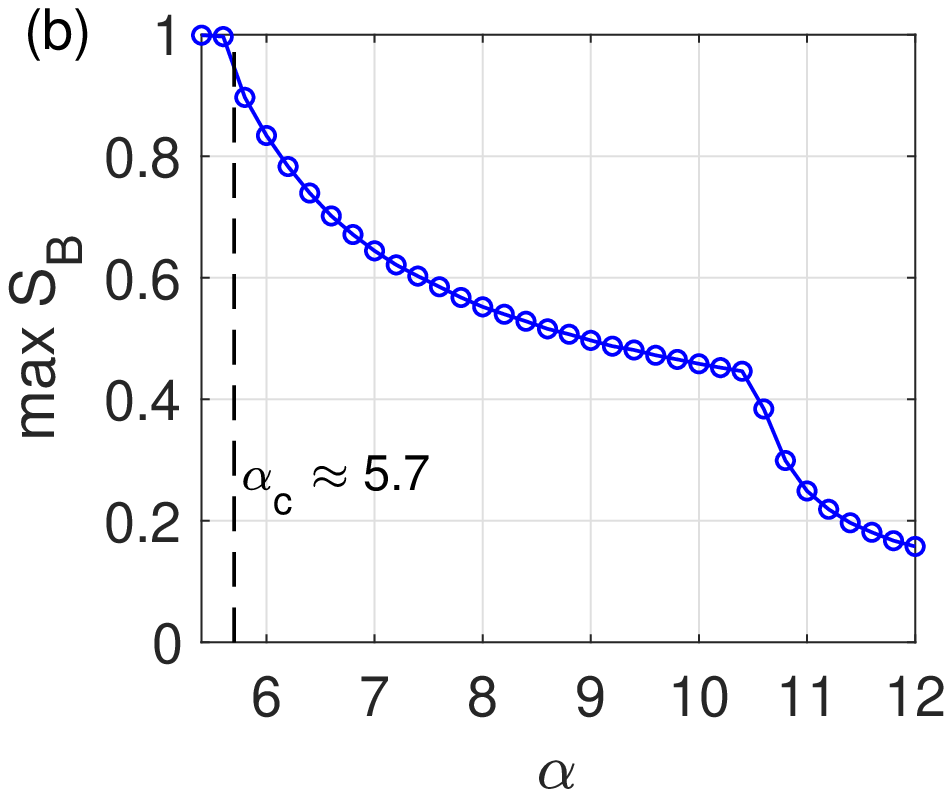}
\caption{{\bf Basin stability in the multistable regime}. (a) The basin stability $S_{\mathcal{B}}$ for the stable fixed points identified in the nonlinear random walk with $\alpha=6$ (blue), $8$ (red), and $10$ (yellow) on the $N=8$ node network illustrated in Figure~\ref{fig:01}. Fixed points were identified by iterating $10^4$ uniformly-distributed random initial conditions in $\Omega$. Note: $9$ fixed points were identified for $\alpha=6$ and $8$ and $14$ were identified for $\alpha=14$. (b) The largest basin stability $S_{\mathcal{B}}$ as a function of the bias parameter $\alpha$.}\label{fig:06}
\end{figure}

We then proceed numerically, identifying fixed points using a large set of randomly chosen initial conditions (drawn uniformly from $\Omega$) and iterating to steady-state. For each distinct fixed point $\bm{p}^*$ found we then approximate its basin stability $S_{\mathcal{B}(\bm{p}^*)}$ as the fraction of all initial conditions that ended up at $\bm{p}^*$. In Figure~\ref{fig:06} (a) we present our results obtained for the network first presented in Figure~\ref{fig:01} using $10^4$ different randomly chosen initial conditions each for $\alpha=6$, $8$, and $10$, plotted in blue, orange, and yellow, respectively. For each value of $\alpha$ we have indexed the steady-state fixed points in order of decreasing basin stability. Starting with $\alpha=6$, which occurs relatively close to the point where we first observe multistability, there is one fixed point whose basin of attraction takes up the vast majority (over $4/5$) of $\Omega$. We also find an additional $7$ stable fixed points, each of which has a basin of attraction that occupies a very small fraction of $\Omega$. When $\alpha$ is increased, we find that the largest basin of attraction shrinks while the others increase in size. This is particularly true for $\alpha=10$, where we also find a large number of fixed points, $14$ in total, compared to $8$ for both $\alpha=6$ and $8$. While it is not guaranteed that this numerical method finds each stable fixed point, these results suggest that for more extreme values of $\alpha$ the relative sizes of the basins of attraction become less heterogeneous and it is possible that more stable fixed point may exist.

Lastly, to complement the investigation into the basins of attraction for the full set of stable fixed points of the nonlinear random walk, we study in more detail the largest basin of attraction in the system. In Figure~\ref{fig:06} (b) plot the basin stability $S_{\mathcal{B}}$ for the fixed point with the largest basin of attraction as $\alpha$ is varied. For $\alpha$ less than the critical value $\alpha_c\approx 5.7$ given above, we find a single fixed point whose basin of attraction contains all $10^4$ randomly chosen initial conditions used., yielding $\text{max}~S_{\mathcal{B}}=1$. However, beyond this critical value, $\text{max}~S_{\mathcal{B}}$ appears to decay smoothly. Interestingly, near $\alpha\approx 10.5$ we observe a second significant dip in $\text{max}~S_{\mathcal{B}}$ which appears to occur due to a sharp increase in the number of stable fixed point of the system, effectively detracting from the size of the largest basin of attraction. We emphasize, however, that the results presented here represent numerical estimates of basin stability and correspond specifically to the network illustrated in Figure~\ref{fig:01}. Therefore, the general properties of multistability that emerges in nonlinear random walks requires deeper investigation in future work.

\section{Discussion}\label{sec:07}

In this paper we have studied the dynamics that emerge in nonlinear random walks on complex networks. The nonlinear random walk represents a generalization to the typical linear random walk, where the transition probabilities dictating the evolution of the system remain constant. Instead, in the case of the nonlinear random walk the transition probabilities to depend on the current state of the system, so the nonlinear random walk is an instance of a specific nonlinear Markov chain. The complexity in the dynamics of the nonlinear random walk thus results in a potentially better model for describing more complicated transport and other dynamics where resources or other quantities tend to be preferentially routed towards nodes depending on their current state. Specifically, the dynamics depend on a biasing parameter that allows for random walkers to be routed more towards nodes that currently hold a larger or smaller fraction of the random walkers in the system. This biasing parameter comes into the transition probability through an exponential function, where, if nodes $i$ and $j$ are network neighbors, then the probability of moving to node $i$ given that one is at node $j$ is proportional to the exponential of the fraction of random walkers at node $i$ times the bias parameter, denoted by $\alpha$. One motivating example is the transport of resources through a network, where institutions may prefer to invest more heavily in other poorer or wealthier institutions, depending on their humanitarian or capitalistic values, however we believe that the simplicity of the model lends itself towards other examples, perhaps with some additional properties built in depending on the specific application.

Our initial analysis of the nonlinear random walk has roughly been split into two portions: the weakly nonlinear regime and the strongly nonlinear regime. The weakly nonlinear regime consists of values of the bias parameter $\alpha$ sufficiently close to zero, where the choice $\alpha=0$ recovers the standard linear random walk on the network. First we proved for networks with primitive adjacency matrices that for sufficiently small values of the biasing parameter the dynamics converge to a unique stationary state fixed point analogous to that of the linear random walk. We then presented a perturbative analysis of the stationary state fixed point in this regime, finding an asymptotic approximation of the solution. In the strongly nonlinear regime we found different dynamics emerging for positive and negative biasing. When the bias parameter is sufficiently negative we characterize a period-doubling bifurcation where the stationary state fixed point loses stability and gives rise to a periodic orbit. In the framework of the transport of resources, this suggests that routing resources too strongly towards poorer institutions yields an instability causing the wealth of all institutions to fluctuate indefinitely. When the bias parameter is sufficiently positive we discovered a multistability in the dynamics where several stable stationary state fixed points emerge. Using the concept of basin stability, we then showed that there is one fixed point whose basin of attraction is much larger than the others', but as biasing is made more extreme this dominant basin of attraction shrinks, giving way to other basins of attraction. In the framework of the transport of resources, this multistability suggests that routing resources too strongly towards wealthier institutions results in a red herring effect where wealthy institutions stay wealthy because they already are wealthy. Overall, the nonlinear random walk displays a wide array of dynamical phenomena ranging from the relaxation-type dynamics in the weakly nonlinear regime that are analogous to the linear system to more exotic behaviors in the strongly nonlinear regime that are signatures of the system's nonlinearity.

As noted previously, nonlinear Markov chains have been investigated in a number of contexts, but to date little work has examined from a nonlinear dynamics perspective the behavior that arises in nonlinear random walks. We have used a handful of simple fixed networks to highlight the dynamics that emerge, so it remains to be seen how the dynamics depend on structural properties of the network such as size, overall connectivity, heterogeneity, directedness, etc. Additionally, we examined a family of nonlinear random walks that use an exponential function to define transition probabilities, but in principle a number of other functions may be used. As different network properties and transition rules are explored, further questions that may be of interest include: How large is the weakly nonlinear regime as a function of the network structure? How does the critical bias parameter corresponding to period-doubling behave for different networks? Is there always a period-doubling bifurcation, or are there networks for which no such period-doubling occurs? Can higher-order periodic orbits be found? How does the network structure affect the multistability for positive bias? What features of the network give rise to more or fewer stationary state fixed points that are all simultaneously stable? Does multistability occur in all networks? These questions and more yield a rich topic for dynamical systems and complex networks researchers to study in years to come.

\appendix

\section{Contraction Mappings and the Contraction Mapping Theorem}\label{app:A}

\begin{definition}[Contraction Mapping~\cite{Hunter2001}]\label{def:Contraction}
Let $X$ be a metric space equipped with a metric $d$. A mapping $T:X\to X$ is a contraction mapping if there exists some constant $0\le c<1$ such that 
\begin{align}
d(T(x),T(y))\le c d(x,y)\label{eq:A:01}
\end{align}
for all $x,y\in X$.
\end{definition}

\begin{theorem}[Contraction Mapping Theorem~\cite{Hunter2001}]\label{Theorem:CMT}
Let $T:X\to X$ be a contraction mapping on a complete metric space $X$. Then there exists a unique fixed point $x^*\in X$. Moreover, the iteration $x_{n+1}=Tx_n$ converges to $x^*$, that is, $\lim_{n\to\infty}x_n=\lim_{n\to\infty}T^nx_0=x^*$ for any initial $x_0\in X$.
\end{theorem}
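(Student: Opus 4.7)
The plan is to follow the classical Banach fixed point argument, which breaks into three conceptual steps: construct a candidate limit via a Cauchy sequence, verify it is a fixed point, and then confirm uniqueness. Fix an arbitrary starting point $x_0\in X$ and form the iterates $x_n=T^n x_0$. The first step is to iterate the contraction inequality to obtain $d(x_{n+1},x_n)\le c\,d(x_n,x_{n-1})\le\cdots\le c^n d(x_1,x_0)$. Combining this geometric decay with the triangle inequality gives, for any $m>n$,
\begin{equation*}
d(x_m,x_n)\le\sum_{k=n}^{m-1}d(x_{k+1},x_k)\le d(x_1,x_0)\sum_{k=n}^{m-1}c^k\le\frac{c^n}{1-c}\,d(x_1,x_0),
\end{equation*}
which tends to zero as $n\to\infty$ since $0\le c<1$. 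Hence $(x_n)$ is Cauchy.

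Next, I invoke completeness of $X$ to extract a limit $x^*=\lim_{n\to\infty}x_n$. To show $T(x^*)=x^*$, I use that any contraction is Lipschitz continuous (with constant $c$), so $T$ is continuous at $x^*$; passing to the limit in the recursion $x_{n+1}=T(x_n)$ yields $x^*=T(x^*)$. Alternatively, and perhaps more cleanly, one can bound $d(T(x^*),x^*)\le d(T(x^*),T(x_n))+d(x_{n+1},x^*)\le c\,d(x^*,x_n)+d(x_{n+1},x^*)$ and send $n\to\infty$.

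For uniqueness, suppose $x^*$ and $y^*$ are both fixed points of $T$. Then
\begin{equation*}
d(x^*,y^*)=d(T(x^*),T(y^*))\le c\,d(x^*,y^*),
\end{equation*}
and since $c<1$ this forces $d(x^*,y^*)=0$, i.e., $x^*=y^*$. Global attraction follows because the initial point $x_0$ was arbitrary and the same Cauchy estimate shows $d(x_n,x^*)\le c^n d(x_1,x_0)/(1-c)\to 0$.

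I do not anticipate a genuine obstacle here; the proof is a textbook argument. The only minor care needed is to make sure the Cauchy estimate is written down correctly (uniformly in $m$) so that the completeness step is unambiguous, and to note explicitly that contractions are continuous before passing the limit through $T$. Both are short and routine.
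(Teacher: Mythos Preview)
Your argument is the standard Banach fixed-point proof and is correct in every detail: the geometric estimate gives a Cauchy sequence, completeness furnishes the limit, Lipschitz continuity of $T$ lets you pass the limit through $T$ to get a fixed point, and the contraction inequality forces uniqueness. Note, however, that the paper does not actually supply its own proof of this theorem; it is stated in the appendix as a classical result cited from \cite{Hunter2001} and used as a black box in the proof of the main existence-and-uniqueness theorem, so there is nothing in the paper to compare your argument against.
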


\bibliographystyle{siam}
\bibliography{references}

\end{document}